\documentclass[a4paper,fleqn]{cas-dc}

\usepackage[numbers,compress]{natbib}

\def\tsc#1{\csdef{#1}{\textsc{\lowercase{#1}}\xspace}}
\tsc{WGM}
\tsc{QE}
\tsc{EP}
\tsc{PMS}
\tsc{BEC}
\tsc{DE}

\usepackage{amsmath,amsfonts,amssymb,amsthm}
\usepackage{mathtools}
\usepackage{bbm}
\usepackage{algorithm}
\usepackage{algorithmicx}
\usepackage{algpseudocode}

\usepackage{array}
\usepackage{textcomp}
\usepackage{stfloats}
\usepackage{placeins}
\usepackage{cuted}
\usepackage{url}
\usepackage{verbatim}
\usepackage{graphicx}
\usepackage{adjustbox}

\usepackage{multirow}
\usepackage{booktabs}
\usepackage[utf8]{inputenc}
\usepackage{xcolor}

\usepackage{tikz}
\usetikzlibrary{arrows.meta, positioning, shapes.geometric}
\usepackage{caption}
\usepackage{subcaption}
\makeatletter
\AtBeginDocument{%
  \renewcommand{\figurename}{Fig.}%
  \def\fnum@figure{\textbf{\figurename\ \thefigure.}\@gobble}%
}
\makeatother
\usepackage{pifont}
\usepackage{enumitem}
\newtheorem{definition}{Definition}
\newtheorem{assumption}{Assumption}

\newtheorem{theorem}{Theorem}
\newtheorem{corollary}{Corollary}
\newtheorem{lemma}{Lemma}
\newtheorem{proposition}{Proposition}
\newtheorem{remark}{Remark}

\newcommand{\zc}{\mathbf{z}_c}
\newcommand{\ze}{\mathbf{z}_e}
\newcommand{\zet}{\tilde{\mathbf{z}}_e}

\begin{document}
\let\WriteBookmarks\relax
\setcounter{topnumber}{4}
\setcounter{bottomnumber}{2}
\setcounter{totalnumber}{6}
\setcounter{dbltopnumber}{2}
\renewcommand{\topfraction}{.95}
\renewcommand{\bottomfraction}{.90}
\renewcommand{\textfraction}{.05}
\renewcommand{\floatpagefraction}{.80}
\renewcommand{\dbltopfraction}{.95}
\renewcommand{\dblfloatpagefraction}{.80}
\shorttitle{GenCAR: Risk-Controlled OOD Recommendation}
\shortauthors{Wang et~al.}

\title[mode=title]{GenCAR: Generative Counterfactual Alignment with Risk-Controlled Selection for Out-of-Distribution Recommendation}

\author[1,2]{Qianqian Wang}
\ead{12331103@mail.sustech.edu.cn}

\author[1]{Yunshan Li}
\ead{lyunshan565@gmail.com}

\author[3]{Jiawen Zeng}
\ead{jiawenze@engineering.upenn.edu}

\author[1]{Wenwu Gong}
\ead{gongww@sustech.edu.cn}

\author[1,2]{Lili Yang}\cormark[1]
\ead{yangll@sustech.edu.cn}

\affiliation[1]{organization={Shenzhen Key Laboratory of Safety and Security for Next Generation of Industrial Internet, Southern University of Science and Technology}, 
city={Shenzhen},
country={China}}
\affiliation[2]{organization={Department of Statistics and Data Science, Southern University of Science and Technology},
  city={Shenzhen},
  country={China}}

\affiliation[3]{organization={School of Engineering and Applied Science, University of Pennsylvania},
  city={Philadelphia},
  country={USA}}

\cortext[1]{Corresponding authors}

\begin{abstract}
Serving useful recommendations under distribution shift is crucial for balancing utility and risk in out-of-distribution (OOD) recommendation. However, most existing OOD methods improve ranking or construct counterfactual candidates without controlling the proxy-label false discovery rate (FDR) of the served set. In this work, we formulate OOD serving as the $\alpha$-Valid Counterfactual Recommendation ($\alpha$-VCR) problem to retain candidate support learned from counterfactual supervision while controlling proxy-label FDR, and propose GenCAR, which couples preference-grounded counterfactual supervision with calibrated set selection. In particular, GenCAR fixes the stable-preference representation while intervening on the environmental factor, grounds offline large language model proposals through preference anchors and trust-radius filtering, and uses conformal $p$-values for Benjamini--Hochberg selection. We theoretically bound conditional counterfactual approximation error and prove finite-sample, distribution-free control of proxy-label FDR under exchangeability and positive regression dependence, with a Benjamini--Yekutieli guarantee under arbitrary dependence. Extensive experiments audit realized proxy false discovery proportions and demonstrate that GenCAR consistently enhances OOD candidate recovery across diverse benchmarks.
\end{abstract}

\begin{keywords}
Out-of-distribution recommendation\sep counterfactual reasoning\sep large language models\sep conformal prediction\sep selective inference
\end{keywords}
\maketitle
\hypersetup{pdfauthor={Qianqian Wang, Yunshan Li, Jiawen Zeng, Wenwu Gong, Lili Yang}}

\section{Introduction}
\label{sec:intro}
Out-of-distribution (OOD) recommender systems serve users across changing environments, where interaction distributions differ between training and serving~\cite{schnabel2016recommendations,wang2024distributionally}. Due to changes in exposure, item popularity, and temporal context, ranking patterns learned from logged interactions may not transfer reliably to the serving environment~\cite{schnabel2016recommendations,zhang2023invcf,yang2023generic}. In such settings, a naive strategy is to enlarge the top-ranked set served to each user, increasing the chance that it contains useful post-shift items. While this strategy can retain more useful items, it can also admit low-quality candidates and lead to recommendation failures as the set grows~\cite{angelopoulos2023reliable}. Thus, it is essential to determine which and how many candidates to serve for each user, which highlights the utility-risk trade-off in OOD recommendation.

Recent research has proposed OOD recommenders that separate preference-related signals from environmental factors to improve ranking after distribution shift~\cite{zheng2021disentangling,zhang2023invcf,wang2023cor}. Distributionally robust objectives and causal diffusion extend this direction to broader forms of shift~\cite{yang2023generic,wang2024distributionally,zhao2025causaldiffrec}. Yet, these methods are learned from interactions logged in the training environment, which do not directly reveal which items remain useful in the serving environment. To obtain supervision for this unobserved setting, a natural strategy is to construct item-level counterfactuals under an environmental intervention. Generative recommenders provide a practical mechanism for constructing such supervision by modeling interaction distributions, generating oracle-like items, or augmenting training sequences~\cite{wang2023diffrec,yang2023dreamrec,liu2023diffaug,liu2025preferdiff,deldjoo2024genrecsys}. However, their objectives optimize reconstruction or personalized ranking without specifying how stable preference should constrain generation under the intervention. Generated candidates can retain environment-specific patterns and produce recommendation failures after the shift.

Counterfactual generation defines the candidate family, but serving still requires deciding which candidates to retain. Conformal recommendation controls the false discovery rate (FDR), the expected proportion of incorrect items among selected recommendations, for sets returned by a given ranker~\cite{angelopoulos2023reliable}. Recent work further controls unwanted recommendations and expands selected sets using observed user feedback~\cite{detoni2025flowers}. These guarantees assume a fixed candidate family and predefined relevance or unwanted-content labels. In our setting, candidates are generated under an environmental intervention, and their alignment with stable preference is represented by proxy labels. Existing methods do not jointly construct such candidates and control their proxy-label FDR during serving. This limitation raises a pivotal issue:
\begin{center}
\emph{How can we retain counterfactual candidates while controlling proxy-label FDR in the served set?}
\end{center}

In this work, we propose Generative Counterfactual Alignment with Risk-Controlled Selection (GenCAR), a two-stage framework that connects preference-grounded counterfactual supervision with calibrated set selection. We define the serving risk as the pooled FDR of candidate pairs labeled as misaligned by a proxy obtained from an offline large language model (LLM). We formulate OOD serving as the $\alpha$-Valid Counterfactual Recommendation ($\alpha$-VCR) problem (Definition~\ref{def:fdr_aligned}), which seeks to retain counterfactual candidate support while keeping proxy-label FDR below a user-specified level $\alpha$. Overall, GenCAR is preference-grounded, risk-controlled, and LLM-free during online serving, as it transforms offline LLM supervision into $\alpha$-valid served sets through stable-preference constraints and conformal calibration.

Theoretically, we establish finite-sample, distribution-free control of pooled proxy-label FDR under target-environment exchangeability and positive regression dependence on the proxy nulls (Theorem~\ref{thm:fdr}). A Benjamini--Yekutieli (BY) variant extends this guarantee to arbitrary dependence (Corollary~\ref{cor:by_fdr}). We also prove that Benjamini--Hochberg (BH) returns the largest member of its nested step-up family (Lemma~\ref{lem:bh_nested}), showing that calibrated selection retains the largest admissible set within this family. A proxy-transfer bound isolates the error required to relate the proxy-label certificate to causal misalignment (Corollary~\ref{cor:proxy_transfer}). For counterfactual construction, we bound the conditional approximation error introduced by anchor projection, LLM realization, and trust-radius filtering (Proposition~\ref{thm:cf_validity}).

To evaluate the validity and effectiveness of GenCAR, we conduct studies on MovieLens (ML)-100K~\cite{harper2015movielens}, Coat~\cite{schnabel2016recommendations}, and Amazon-Book~\cite{he2016ups}, together with two additional calibration datasets. The results show that realized proxy FDP remains below $\alpha$ on all five evaluated candidate pools and that GenCAR consistently improves OOD candidate recovery. Specifically, compared with the shared causal variational autoencoder (CausalVAE) backbone, GenCAR improves Recall@10 by 11.0\%, 19.1\%, and 43.5\% on the three primary benchmarks. Moreover, GenCAR achieves the highest mean Recall@10 among methods with LLM-free serving on all three benchmarks. Matched ablations show that additional training pairs alone do not explain these gains. On Coat, GenCAR retains sub-millisecond online latency, while inference-time LLM reranking requires approximately 5.4 seconds per user.

Our contributions are summarized as follows:
\begin{itemize}
\item We formulate OOD serving as the \textbf{$\alpha$-VCR problem}, establishing a principled framework to retain counterfactual candidate support while controlling pooled proxy-label FDR below a user-specified level $\alpha$.
\item We propose GenCAR, a two-stage framework that transforms offline LLM proposals into preference-grounded counterfactual supervision and calibrated served sets. GenCAR supports variable-size and empty outputs and remains LLM-free during online serving.
\item We derive a conditional approximation bound for counterfactual construction and establish finite-sample pooled proxy-label FDR control for calibrated selection. We further establish BH-family maximality and quantify the proxy error separating this certificate from causal misalignment.
\end{itemize}

\section{Background}\label{sec:prelim}
In this section, we introduce OOD recommendation under environmental shift, discuss the misalignment risk associated with serving post-shift candidates, and formally define the $\alpha$-Valid Counterfactual Recommendation ($\alpha$-VCR) problem.

\subsection{Preliminaries}
\label{sec:task}
Let $\mathcal U$ and $\mathcal I$ denote the user and item spaces. We use $(u,j)$ for a fixed user--item pair and $(U,J)$ for a random pair. For each user $u\in\mathcal U$, let $x_u$ denote the observed interaction history. The training log
$
\mathcal D^{\mathrm{train}}
=
\{(u,j):y_{uj}=1\}
$ is sampled from $P_{\mathrm{train}}$, while serving pairs follow $P_{\mathrm{test}}$. We consider an OOD setting in which
$P_{\mathrm{train}}\neq P_{\mathrm{test}}$ because the exposure policy, item popularity, or temporal context changes between training and serving~\cite{schnabel2016recommendations,zhang2023invcf,wang2024distributionally}.

To determine which items remain suitable after an environmental change, we model each interaction with a structural causal model (SCM). The latent state contains stable user preference $\zc$, an environmental factor $\ze$, and idiosyncratic noise $\eta$. This factorization separates preference-related information from environment-dependent variation~\cite{scholkopf2021causal,khemakhem2020ivae}. Fig.~\ref{fig:scm} contrasts the resulting OOD recommendation problem with a naive predictor and a disentangled model without cross-environment invariance.

\begin{figure}
\centering
\includegraphics[width=\columnwidth]{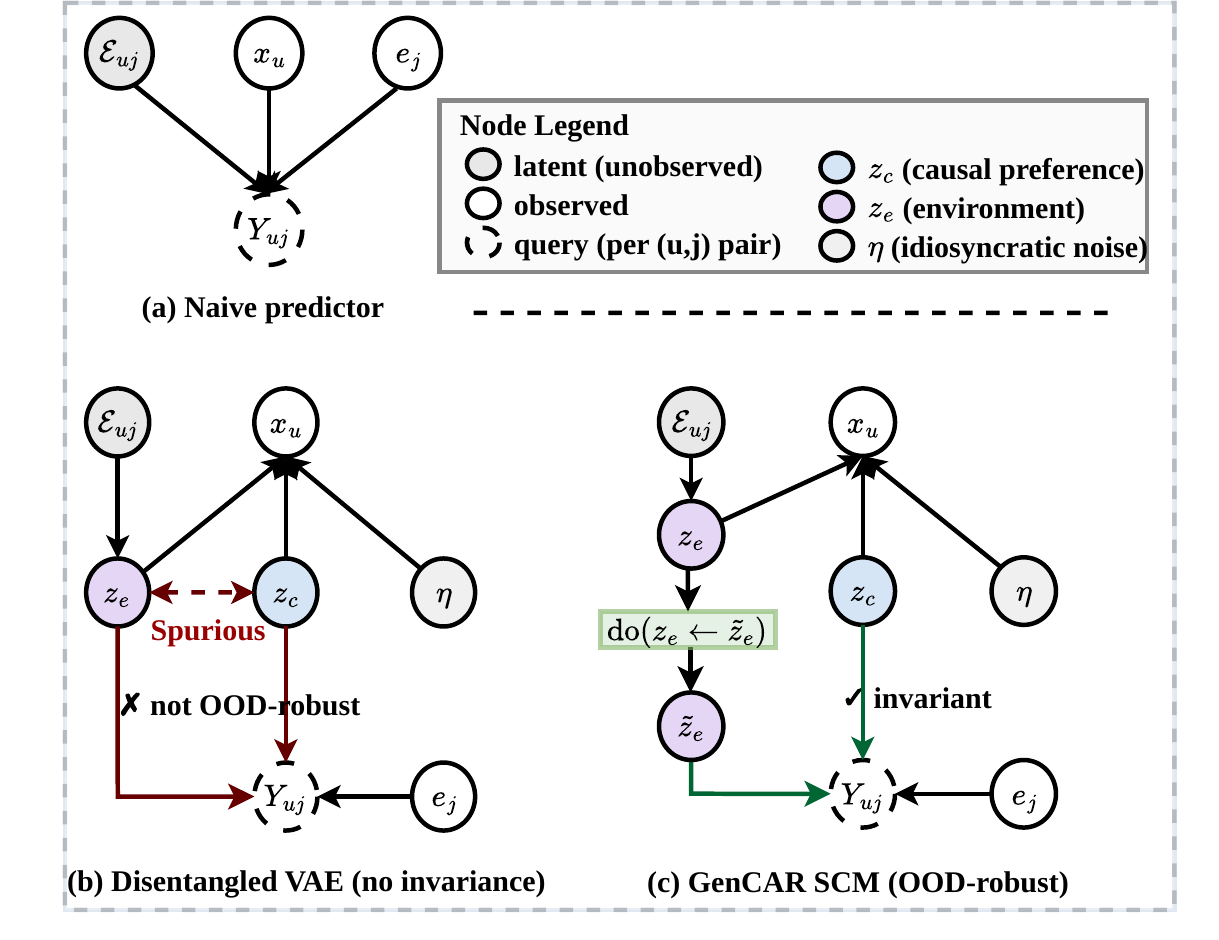}
\caption{Structural causal models for OOD recommendation.
(a) A naive predictor entangles user preference with the observed environment.
(b) A disentangled model separates the latent factors but does not preserve preference across environments.
(c) The target SCM changes the environmental factor $\ze$ while keeping stable preference $\zc$ fixed.}
\label{fig:scm}
\end{figure}

\begin{assumption}[SCM decomposition]\label{asm:scm}
For each user--item pair $(u,j)$, the interaction is generated according to
\[
\zc\sim p(\zc),
\ \
\ze\sim p_e(\ze),
\ \
\eta\sim p(\eta),
\ \
j\sim p(j\mid \zc,\ze,\eta).
\]
Here, $\zc$ represents stable user preference, $\ze$ captures the current environment, and $\eta$ accounts for idiosyncratic variation.
\end{assumption}

\begin{assumption}[Environmental shift]\label{asm:shift}
For each environment $e$, the joint distribution factorizes as
\[
P_e(\zc,\ze,\eta,j)
=
p(\zc)\,p(\eta)\,p_e(\ze)\,
p(j\mid \zc,\ze,\eta).
\]
Across training and serving, $p(\zc)$, $p(\eta)$, and the interaction mechanism
$p(j\mid \zc,\ze,\eta)$ remain invariant \cite{arjovsky2019invariant}, while $p_e(\ze)$ may change.
\end{assumption}

The environmental-shift assumption localizes the distribution change to $\ze$. For a target environment $\tilde{\ze}$, the intervention
$\operatorname{do}(\ze:=\tilde{\ze})$ changes the environment while preserving the stable preference of user $u$. This intervention defines whether an item remains aligned with that user after the shift.

\begin{definition}[Causal alignment target]
For a user--item pair $(u,j)$, define
\[
G_{uj}
=
\mathbbm{1}
\left[
p\!\left(
j
\mid
\zc^{(u)},
\operatorname{do}(\ze:=\tilde{\ze}),
\eta^{(u)}
\right)
\geq \gamma_u
\right],
\]
where $\gamma_u$ is a user-specific alignment threshold. The corresponding aligned item set is
$
\mathcal G_{\tilde{\ze}}(u)
=
\{j\in\mathcal I:G_{uj}=1\}.
$ The set $\mathcal G_{\tilde{\ze}}(u)$ may be empty when no candidate item is aligned with user $u$ in the target environment.
\end{definition}

\noindent\textbf{OOD recommendation.}
A standard OOD recommender relies on a scoring function:
$s_\theta:\mathcal U\times\mathcal I\rightarrow\mathbb R,$
where $s_\theta(u,j)$ represents the predicted relevance of item $j$ to user $u$ under the target environment. We write $\mathcal C_u$ for the top-$K$ ranked items of user $u$ and $\mathcal Q(\mathcal B)=\{(u,j):u\in\mathcal B,\,j\in\mathcal C_u\}$ for the pooled candidate family of a serving batch $\mathcal B$.

Existing OOD recommenders improve $s_\theta$ by separating preference-related signals from environmental factors or by optimizing performance under distribution shift~\cite{zheng2021disentangling,zhang2023invcf,wang2023cor,wang2024distributionally}. Generative recommenders extend this direction by constructing additional supervision or candidate support beyond observed interactions~\cite{wang2023diffrec,yang2023dreamrec,liu2023diffaug,liu2025preferdiff}. These methods can recover useful items that are absent from the training environment. While effective for candidate recovery, the ranking score does not determine whether a candidate belongs to
$\mathcal G_{\tilde{\ze}}(u)$. A natural serving strategy is to retain the highest-ranked candidates. Fixed set sizes and uncalibrated thresholds, however, provide no control over the proportion of retained candidates that are misaligned with the target preference. Retaining more candidates can recover additional post-shift support and can also admit more misaligned recommendations. To this end, we formalize OOD serving as pooled risk control over $\mathcal Q(\mathcal B)$.

\subsection{Problem Formulation}
\label{sec:problem}

Given a ranked candidate family $\mathcal Q(\mathcal B)$, we construct a set-valued selection rule $\Gamma$ that returns a served set as
$\mathcal S_\Gamma
=
\Gamma(\mathcal Q(\mathcal B)).
$ We characterize $\Gamma$ along two dimensions: proxy-label risk and retained candidate support. The causal null set contains candidate pairs that are not aligned with the target preference:
\[
\mathcal H_{0}^{\mathrm{causal}}
=
\{(u,j)\in\mathcal Q(\mathcal B):G_{uj}=0\}.
\]
The indicator $G_{uj}$ is not observed during serving. We therefore construct an operational null set using the LLM-derived alignment proxy:
\[
\mathcal H_{0}^{\mathrm{proxy}}
=
\{(u,j)\in\mathcal Q(\mathcal B):
\text{$(u,j)$ is labeled as misaligned by the proxy}\}.
\]

For a served set $\mathcal S$, define its proxy-label false discovery proportion (FDP) as
\[
\operatorname{FDP}_{\mathrm{proxy}}(\mathcal S)
=
\frac{|\mathcal S\cap\mathcal H_{0}^{\mathrm{proxy}}|}
{\max\{|\mathcal S|,1\}}.
\]
The corresponding pooled false discovery rate is
\[
\operatorname{FDR}_{\mathrm{proxy}}(\mathcal S)
=
\mathbb E
\left[
\operatorname{FDP}_{\mathrm{proxy}}(\mathcal S)
\right],
\]
where the expectation is taken over the target candidate batch and any randomness in the selection rule. The retained candidate support is measured by $\mathbb E[|\mathcal S|]$.

We formulate the goal of retaining candidate support while controlling proxy-label risk as the $\alpha$-Valid Counterfactual Recommendation ($\alpha$-VCR) problem.

\begin{definition}[$\alpha$-Valid Counterfactual Recommendation]
\label{def:fdr_aligned}

Given a risk level $\alpha\in(0,1)$, the goal of $\alpha$-VCR is to find a selection rule $\Gamma_\alpha$ whose served set satisfies the pooled proxy-label FDR constraint:
\begin{equation}
\mathrm{FDR}_{\mathrm{proxy}}
\!\left(\mathcal  S_{\alpha}(\mathcal B)\right)
\leq
\alpha.
\label{eq:alpha_vcr}
\end{equation}

where $\mathcal S_\alpha(\mathcal B)=\Gamma_\alpha(\mathcal Q(\mathcal B))$. Among valid rules, larger $\mathbb E[|\mathcal S_\alpha(\mathcal B)|]$ means that more candidate support is retained.
\end{definition}

By bounding the expected proportion of proxy-misaligned pairs, $\alpha$-VCR prevents candidate expansion from relying on uncontrolled misaligned recommendations. The expected served-set size records how much post-shift candidate support remains under this risk constraint.

In practice, the target proxy-null distribution is unknown, so the $\alpha$-VCR constraint cannot be evaluated exactly. A data-driven solution requires a finite calibration set.

\begin{remark}[Interpretation of validity]
A rule is $\alpha$-valid when its pooled proxy-label FDR does not exceed $\alpha$. An empty set has zero FDP and represents abstention. Corollary~\ref{cor:proxy_transfer} isolates the additional error needed to transfer this proxy certificate to causal misalignment.
\end{remark}

\section{GenCAR}\label{sec:method}

In this section, we present Generative Counterfactual Alignment with Risk-Controlled Selection (GenCAR), a two-stage framework for solving the $\alpha$-VCR problem (Definition~\ref{def:fdr_aligned}). GenCAR first constructs preference-grounded counterfactual supervision under a specified environmental intervention. The accepted counterfactuals train a ranker that forms the post-shift candidate family $\mathcal Q(\mathcal B)$. GenCAR next calibrates evidence against proxy misalignment and selects a variable-size served set satisfying the proxy-label FDR constraint in Eq.~\eqref{eq:alpha_vcr}. All LLM calls and proxy labeling occur offline, leaving online serving to the learned ranker and stored calibration artifacts.

As illustrated in Figure~\ref{fig:framework}, GenCAR contains three modules. Preference-grounded counterfactual construction fixes the stable-preference representation, intervenes on the environmental factor, and converts offline LLM proposals into training supervision. Proxy-label risk calibration fits an alignment predictor and converts a disjoint proxy-null calibration set into conformal evidence. Risk-controlled serving applies BH or BY to the ranked candidate family and returns a pooled served set together with its user-level lists. The first two modules operate offline, while online serving uses only the stored model and calibration artifacts.

\begin{figure*}
    \centering
    \includegraphics[width=\linewidth]{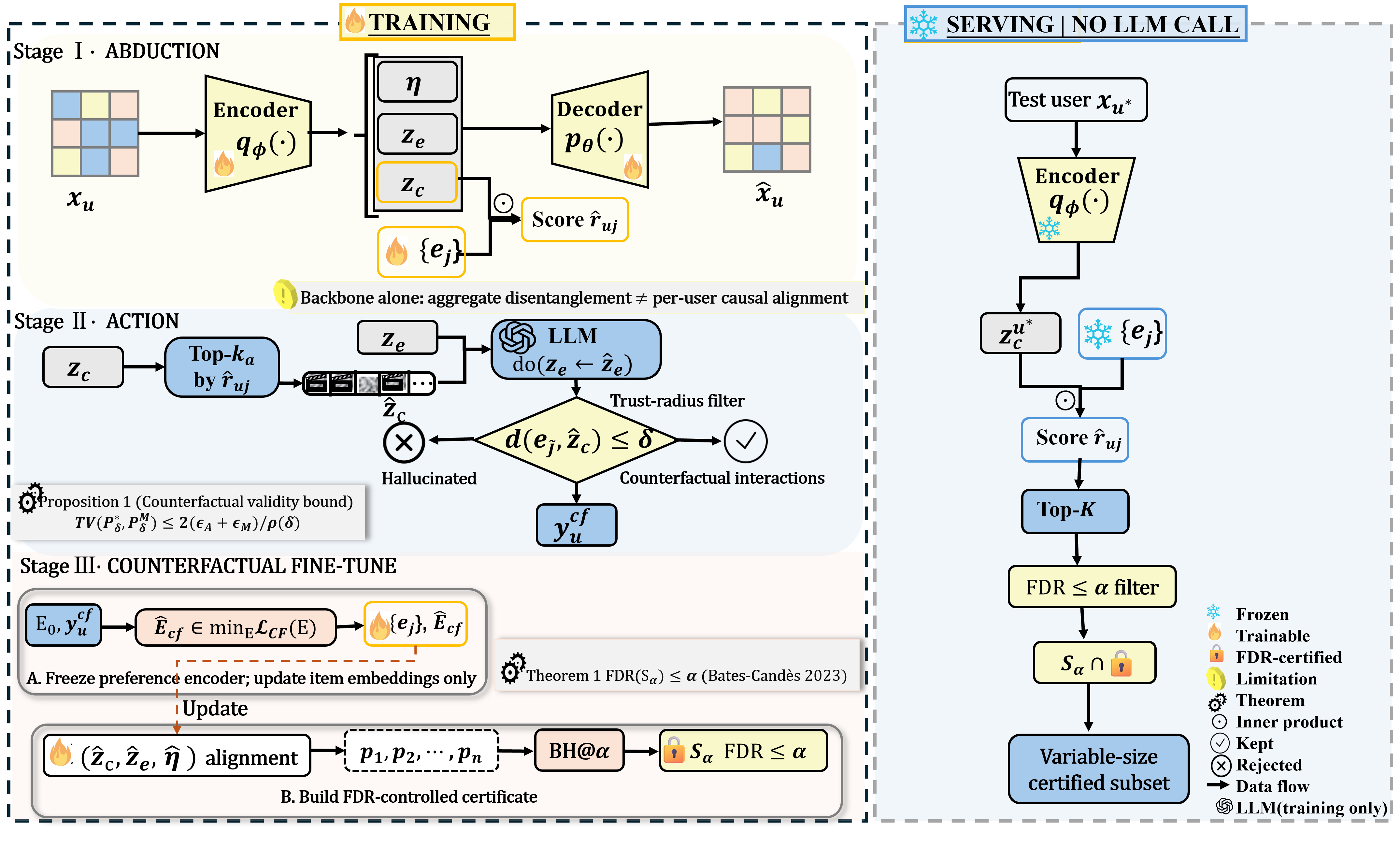}
\caption{
\textbf{Overview of GenCAR.}
GenCAR operates in offline and online phases.
\textbf{Offline construction and calibration (left).}
The backbone separates stable preference from the environmental factor.
GenCAR fixes stable preference, intervenes on the environment, and grounds offline LLM proposals through preference anchors and trust-radius filtering.
The accepted items fine-tune the ranker.
A disjoint proxy-null split supplies the calibration scores used to compute conformal evidence.
\textbf{Risk-controlled serving (right).}
GenCAR forms a ranked candidate family without an LLM call and applies BH or BY at level $\alpha$.
The resulting pooled served set is projected onto user-level lists, with an empty list representing abstention.
}.
\label{fig:framework}
\end{figure*}

\subsection{Preference-Grounded Counterfactual Construction}
\label{sec:counterfactual_construction}

The ranked candidate family $\mathcal Q(\mathcal B)$ must contain items suitable for the target environment, yet these post-shift interactions are absent from the training log. We construct item-level supervision by changing the environmental factor while preserving the learned stable-preference representation.  Given an interaction vector $x_u\in\{0,1\}^{|\mathcal I|}$, the encoder $q_\phi$ produces the factorized posterior
\begin{equation*}
q_\phi(\zc,\ze,\eta\mid x_u)
=
q_\phi(\zc\mid x_u)
q_\phi(\ze\mid x_u)
q_\phi(\eta\mid x_u).
\end{equation*}

The three factors correspond to the SCM in Assumption~\ref{asm:scm}. The stable-preference representation $\zc$ determines ranking, while $\ze$ records environment-dependent variation and $\eta$ captures residual variation.

\noindent\textbf{Preference--environment factorization.} The backbone combines reconstruction, Bayesian personalized ranking (BPR),
and factor separation:
\begin{equation*}
\mathcal L_{\mathrm{backbone}}
=
\mathcal L_{\mathrm{ELBO}}
+
\lambda_{\mathrm{BPR}}\mathcal L_{\mathrm{BPR}}
+
\lambda_{\mathrm{disent}}\mathcal L_{\mathrm{disent}}.
\end{equation*}

Here, $\mathcal L_{\mathrm{ELBO}}$ denotes the negative of the evidence lower bound (ELBO) and
$\lambda_{\mathrm{BPR}},\lambda_{\mathrm{disent}}\geq 0$.
Let $\mathcal T_{\mathrm{obs}}$ denote the observed training triplets,
where $j^+$ is interacted and $j^-$ is unobserved for user $u$.

The ranker scores item $j$ for user $u$ by
$\hat r_{uj}
=
\left\langle \zc^{(u)},e_j\right\rangle $, and optimizes the BPR objective~\cite{rendle2009bpr}
\begin{equation*}
\mathcal L_{\mathrm{BPR}}
=
-\!\!\sum_{(u,j^+,j^-)\in\mathcal T_{\mathrm{obs}}}
\log \sigma\!\left(
\hat r_{uj^+}-\hat r_{uj^-}
\right).
\end{equation*}

The reconstruction and ranking terms preserve predictive information, while $\mathcal L_{\mathrm{disent}}$ separates $\zc$ from $\ze$. Only $\zc$ enters the ranking score $\hat r_{uj}$. This design makes the environmental intervention operational without redefining the learned preference coordinate.

\noindent\textbf{Abduction and environmental intervention.}
We construct counterfactual supervision through
abduction, action, and prediction~\cite{pearl2009causality}.
For each user $u$, the encoder infers
$(\hat\zc^{(u)},\hat\ze^{(u)},\hat\eta^{(u)})$
from $x_u$. We replace $\hat\ze^{(u)}$ with the target factor $\hat\ze$ while keeping $\hat\zc^{(u)}$ and $\hat\eta^{(u)}$ fixed. The resulting counterfactual target is
\begin{equation*}
p\!\left(
j\mid
\hat\zc^{(u)},
\mathrm{do}(\ze:=\hat\ze),
\hat\eta^{(u)}
\right).
\end{equation*}

This distribution describes item suitability in the target environment for the preference state inferred from the observed user history.

\noindent\textbf{Anchor projection and LLM realization.}
The latent vector $\hat\zc^{(u)}$ cannot be supplied directly to an LLM. We map this vector to the catalog items with the highest preference scores:
\begin{equation*}
\mathcal A_u
=
T(\hat\zc^{(u)})
=
\operatorname{TopK}_{j\in\mathcal I}
\left\langle \hat\zc^{(u)},e_j\right\rangle .
\end{equation*}

The item descriptions in $\mathcal A_u$ provide a discrete representation of the learned preference neighborhood. GenCAR combines these anchors with the target-environment descriptor in the offline prompt. The LLM returns catalog item identifiers $\widehat{\mathcal Y}_u$ and an alignment score
$a_{uj}^{\mathrm{LLM}}\in[0,1]$ for each proposal.

\noindent\textbf{Trust-radius filtering.} Let $E_0=\{e_j^0\}_{j\in\mathcal I}$ be the backbone item embeddings used for anchor projection and filtering. Invalid catalog identifiers are removed, and each remaining proposal receives the distance $d_0(\tilde j,\zc^{(u)})=1-\cos(e_{\tilde j}^0,\zc^{(u)})$. The accepted counterfactual set is
\begin{equation}
\mathcal Y_u^{\mathrm{cf}}
=
\left\{
\tilde j\in\widehat{\mathcal Y}_u\cap\mathcal I
:
d_0(\tilde j,\zc^{(u)})\leq\delta
\right\}.
\label{eq:trusted_counterfactuals}
\end{equation}

Equation~\eqref{eq:trusted_counterfactuals} leaves one offline supervision object: catalog-valid proposals inside the preference trust radius.

\noindent\textbf{Counterfactual fine-tuning.} The accepted proposals provide item-level supervision for the target environment. Let $\mathcal U_{\mathrm{cf}}$ denote the users with accepted proposals.
For each $u\in\mathcal U_{\mathrm{cf}}$ with accepted items, $\Pi_u^{\mathrm{cf}}$ pairs $j^+\in\mathcal Y_u^{\mathrm{cf}}$ with an item $j^-$ unobserved for that user. We keep the encoder fixed and update only the item embeddings with the observed and counterfactual BPR losses. GenCAR minimizes
\begin{equation}
\begin{adjustbox}{max width=\linewidth}
$\displaystyle
\mathcal{L}_{\mathrm{CF}}(E)
= -
\sum_{u\in\mathcal{U}_{\mathrm{cf}}}
\mathbb{E}_{(j_+,j_-)\sim\Pi_u^{\mathrm{cf}}}
\left[
\log\sigma\!\left(
\left\langle \hat{\mathbf z}_c^{(u)}, e_{j_+} \right\rangle
-
\left\langle \hat{\mathbf z}_c^{(u)}, e_{j_-} \right\rangle
\right)
\right].
$
\end{adjustbox}
\label{eq:cf_bpr}
\end{equation}

The counterfactual update is defined by $\widehat E_{\mathrm{cf}}\in \arg\min_E \mathcal L_{\mathrm{CF}}(E).$ The encoded preference representations $\hat\zc^{(u)}$ remain fixed, and $E$ is the optimized parameter block in Eq.~\eqref{eq:cf_bpr}. Below, $e_j$ denotes the corresponding updated embedding in
$\widehat E_{\mathrm{cf}}$.

Counterfactual fine-tuning changes item placement while preserving the preference representation used to construct the intervention. The updated ranker forms the candidate family $\mathcal Q(\mathcal B)$. LLM proposals provide offline supervision and are never returned directly during serving. The accepted set $\mathcal Y_u^{\mathrm{cf}}$ supplies ranking supervision, whereas the proxy label introduced in Section~\ref{sec:proxy_calibration} defines the operational null for risk-controlled selection.

\noindent\textbf{Counterfactual approximation path.} The intervention target is not observed directly. We approximate it through anchor projection and LLM realization. For a fixed user and target environment, let $P^\star$, $P^A$, and $P^M$ denote the target, anchor-projected, and LLM-realized distributions:
\begin{equation}
\begin{aligned}
P^\star(j)
&=
p\!\left(
j\mid
\hat\zc^{(u)},
\mathrm{do}(\ze:=\hat\ze),
\hat\eta^{(u)}
\right),\\
P^A(j)
&=
P_M\!\left(
j\mid
T(\hat\zc^{(u)}),
\mathrm{do}(\ze:=\hat\ze)
\right),\\
P^M(j)
&=
P_M\!\left(
j\mid
\operatorname{prompt}
(T(\hat\zc^{(u)}),\hat\ze)
\right).
\end{aligned}
\label{eq:counterfactual_distributions}
\end{equation}

These distributions isolate the approximation introduced by anchor projection and LLM realization. Proposition~\ref{thm:cf_validity} bounds their conditional discrepancy after trust-radius filtering.

\subsection{Proxy-Label Risk Calibration}
\label{sec:proxy_calibration}
Counterfactual fine-tuning determines the ranked candidate family $\mathcal Q(\mathcal B)$, while $\alpha$-VCR requires a separate decision about which pairs enter the served set. The target proxy-null score distribution is unknown, so its FDR constraint cannot be evaluated directly. GenCAR uses a finite, disjoint calibration set to construct conformal evidence against proxy misalignment. This stage returns the pooled served set $\mathcal S_\alpha(\mathcal B)$.

\noindent\textbf{Proxy alignment predictor.} The offline LLM alignment score defines the proxy label
\begin{equation}
Y_{uj}^{\mathrm{proxy}}=\mathbbm{1}\!\left[a_{uj}^{\mathrm{LLM}}\geq\tau\right].
\label{eq:proxy_label}
\end{equation}

Pairs with $Y_{uj}^{\mathrm{proxy}}=0$ instantiate
$\mathcal H_0^{\mathrm{proxy}}$ in Definition~\ref{def:fdr_aligned}. This operational proxy label remains distinct from the unobserved causal indicator $G_{uj}$. GenCAR trains
\begin{equation*}
\hat h:
(\hat\zc^{(u)},\hat\ze^{(u)},e_j)
\longmapsto [0,1]
\end{equation*}
to estimate proxy alignment.
For brevity, we write
$\hat h(u,j):=
\hat h(\hat\zc^{(u)},\hat\ze^{(u)},e_j)$.
The predictor minimizes binary cross-entropy on the alignment split:
\begin{equation}
\begin{aligned}
\mathcal L_h
=
&-\frac{1}{|\mathcal D_{\mathrm{align}}|}
\sum_{(u,j)\in\mathcal D_{\mathrm{align}}}
\Bigl[
Y_{uj}^{\mathrm{proxy}}\log \hat h(u,j)\\
&+
(1-Y_{uj}^{\mathrm{proxy}})
\log(1-\hat h(u,j))
\Bigr].
\end{aligned}
\label{eq:alignment_loss}
\end{equation}
The predictor $\hat h$ transfers the offline proxy labels to user--item pairs that appear in the serving candidate family. The encoder remains fixed while $\hat h$ is trained.

\noindent\textbf{Calibration split.}
Counterfactual fine-tuning, alignment training, and conformal
calibration use disjoint user splits. In particular,
\begin{equation*}
\mathcal D_{\mathrm{proxy}}
=
\mathcal D_{\mathrm{align}}
\cup
\mathcal D_{\mathrm{cal}},
\quad
\mathcal U_{\mathrm{align}}
\cap
\mathcal U_{\mathrm{cal}}
=
\emptyset .
\end{equation*}
Let
\begin{equation*}
\mathcal D_{\mathrm{cal}}^{(0)}
=
\left\{
(u_i,j_i)\in\mathcal D_{\mathrm{cal}}
:
Y_{u_i j_i}^{\mathrm{proxy}}=0
\right\}
\end{equation*}
denote the proxy-null calibration set. No pair in $\mathcal D_{\mathrm{cal}}^{(0)}$ is used to train the encoder, item embeddings, or alignment predictor. Conditional on these learned artifacts, the theoretical guarantee requires the proxy-null calibration scores to be exchangeable with proxy-null scores in the target candidate family.

\noindent\textbf{Conformal evidence.}
For every user--item pair, we define the nonconformity score $V(u,j)=1-\hat h(u,j).$ For a candidate pair $(u,j)$ in the serving batch, its
lower-tail conformal $p$-value is
\begin{equation}
p(u,j)
=
\frac{
1+
\sum_{(u_i,j_i)\in\mathcal D_{\mathrm{cal}}^{(0)}}
\mathbbm{1}[V(u_i,j_i)\leq V(u,j)]
}{
1+|\mathcal D_{\mathrm{cal}}^{(0)}|
}.
\label{eq:method_pvalue}
\end{equation}

A small $p(u,j)$ provides evidence against the proxy-misalignment null. Under conditional target-environment exchangeability, the $p$-value is super-uniform for every pair in $\mathcal H_0^{\mathrm{proxy}}$. Lemma~\ref{lem:pvalue_validity} formalizes this property.

\noindent\textbf{FDR-controlled set selection.}
For a serving batch $\mathcal B$, let $N=|\mathcal Q(\mathcal B)|$ and order the candidate $p$-values as $p_{(1)}\leq\cdots\leq p_{(N)}$. BH select
\begin{equation}
\widehat k_\alpha
=
\max\left\{
k:
p_{(k)}
\leq
\frac{\alpha k}{N}
\right\},
\label{eq:bh_index}
\end{equation}
with $\widehat k_\alpha=0$ when no index satisfies the condition, and returns
\begin{equation}
\mathcal S_\alpha(\mathcal B)
=
\left\{
(u,j)\in\mathcal Q(\mathcal B):
p(u,j)
\leq
\frac{\alpha\widehat k_\alpha}{N}
\right\}.
\label{eq:certified_set}
\end{equation}

The BH rule returns the largest member of its nested step-up family that satisfies the BH condition. Lemma~\ref{lem:bh_nested} establishes this maximality property. Under the exchangeability and dependence conditions of Theorem~\ref{thm:fdr}, the resulting set controls pooled proxy-label FDR at level $\alpha$. When arbitrary dependence is allowed, we replace $\alpha$ with the BY correction $\alpha/H_N$, where $H_N=\sum_{k=1}^{N}k^{-1}$.

\subsection{Risk-Controlled Serving}
\label{sec:serving}

After offline training and calibration, We apply the learned ranker and stored calibration artifacts to each serving batch. The online state is
\begin{equation*}
\Theta_{\mathrm{serve}}
=
\left(
\hat q_\phi,
\hat e_{1:|\mathcal I|},
\hat h,
\{V_i\}_{i\in\mathcal D_{\mathrm{cal}}^{(0)}}
\right).
\end{equation*}
The decoder and LLM are not part of the online path.

\noindent\textbf{Candidate ranking.}
For each user $u$ in a serving batch $\mathcal B$, the encoder infers
$(\hat\zc^{(u)},\hat\ze^{(u)})$ from $x_u$. The counterfactually fine-tuned ranker uses $\hat\zc^{(u)}$ to form $ \mathcal C_u^K=\operatorname{TopK}_{j\in\mathcal I}\left\langle\hat\zc^{(u)},e_j\right\rangle.$ The batch candidate family is
\begin{equation*}
\mathcal Q(\mathcal B)
=
\left\{
(u,j):
u\in\mathcal B,\,
j\in\mathcal C_u^K
\right\}.
\end{equation*}

\noindent\textbf{Certified serving.}
We evaluate $\hat h$ and compute conformal $p$-values for all pairs in $\mathcal Q(\mathcal B)$. Applying Eq.~\eqref{eq:bh_index} produces the pooled served set $\mathcal S_\alpha(\mathcal B)$. The list returned to user $u$ is
\begin{equation*}
\mathcal S_{\alpha}
=
\left\{
j\in\mathcal C_u^K:
(u,j)\in\mathcal S_\alpha(\mathcal B)
\right\},
\end{equation*}
in the original ranker order; an empty list is an abstention. The FDR guarantee applies to the pooled set $\mathcal S_\alpha(\mathcal B)$. Each user-level list inherits the membership decisions of that set. The risk level \(\alpha\) is specified for the pooled serving batch. Each user-level list inherits membership decisions from \(\mathcal S_\alpha(\mathcal B)\); the certificate does not assign a separate FDR guarantee to each user.

Notably, GenCAR offers several compelling advantages:
\begin{itemize}
    \item \textbf{Preference-grounded.}
    GenCAR fixes $\zc$ under the environmental intervention and constrains offline LLM proposals through preference anchors and trust-radius filtering.

    \item \textbf{Risk-controlled.}
    Conformal evidence and BH or BY convert the user-specified level $\alpha$ into pooled proxy-label FDR control under the conditions in Section~\ref{sec:theory}.

    \item \textbf{LLM-free online serving.}
    LLM generation and proxy labeling occur offline. Online serving uses the encoder, item embeddings, alignment predictor, and stored proxy-null scores.
\end{itemize}

Algorithm~\ref{alg:gencar} summarizes the complete offline and online pipeline. The next section bounds the conditional counterfactual approximation error, establishes maximality within the BH family, and proves that calibrated selection satisfies the proxy-label FDR constraint in $\alpha$-VCR.
\begin{algorithm}[t]
\caption{GenCAR training and risk-controlled serving.}
\label{alg:gencar}
\small
\begin{algorithmic}[1]
\Require Training data $\mathcal D^{\mathrm{train}}$,
counterfactual split $\mathcal D_{\mathrm{cf}}$,
alignment split $\mathcal D_{\mathrm{align}}$,
calibration split $\mathcal D_{\mathrm{cal}}$,
offline LLM $\mathcal M$, target intervention $\tilde\ze$,
trust radius $\delta$, anchor size $K_a$, candidate size $K$ and risk level $\alpha$
\Ensure Pooled served set $\mathcal S_\alpha(\mathcal B)$ and user lists $\mathcal S_\alpha(u)$
\Statex \textit{Offline construction}
\State Train the backbone encoder and embeddings $E_0$ on $\mathcal D^{\mathrm{train}}$
\For{each user $u$ represented in $\mathcal D_{\mathrm{cf}}$}
    \State Infer $\zc^{(u)}$ and form preference anchors $\mathcal A_u$
    \State Generate $\widehat{\mathcal Y}_u\gets\mathcal M(\mathcal A_u,\zet)$
    \State Filter proposals by Eq.~\eqref{eq:trusted_counterfactuals} to obtain $\mathcal Y_u^{\mathrm{cf}}$
    \State Form $\Pi_u^{\mathrm{cf}}$ from $j^+\in\mathcal Y_u^{\mathrm{cf}}$ and user-unobserved $j^-$
\EndFor
\State Optimize Eq.~\eqref{eq:cf_bpr} to obtain $E_{\mathrm{cf}}$

\Statex \textit{Offline calibration}
\State Label alignment and calibration pairs by Eq.~\eqref{eq:proxy_label}
\State Fit $h$ on $\mathcal D_{\mathrm{align}}$ and form $\mathcal D_{\mathrm{cal}}^{(0)}$
\State Store $\{V_i=1-h(u_i,j_i)\}_{(u_i,j_i)\in\mathcal D_{\mathrm{cal}}^{(0)}}$

\Statex \textit{LLM-free serving}
\State Infer $\zc^{(u)},\ze^{(u)}$ and rank each $u\in\mathcal B$ into $\mathcal C_u$; pool $\mathcal Q(\mathcal B)$
\State Compute $h(u,j)$ and Eq.~\eqref{eq:method_pvalue} for all $(u,j)\in\mathcal Q(\mathcal B)$
\State Apply Eqs.~\eqref{eq:bh_index}--\eqref{eq:certified_set}
\For{each $u\in\mathcal B$}
    \State Return $\mathcal S_\alpha(u)=\{j\in\mathcal C_u:(u,j)\in\mathcal S_\alpha(\mathcal B)\}$ in ranker order
\EndFor
\end{algorithmic}
\end{algorithm}

\section{Theoretical Analysis}
\label{sec:theory}
In this section, we analyze the construction and selection stages of
GenCAR. We first bound the conditional approximation error introduced
by anchor projection, LLM realization, and trust-radius filtering
(Proposition~\ref{thm:cf_validity}). We next show that the BH candidate
sets form a nested family with a largest step-up-admissible member
(Lemma~\ref{lem:bh_nested}) and establish finite-sample validity of the
proxy-null conformal $p$-values (Lemma~\ref{lem:pvalue_validity}).
Finally, we prove pooled proxy-label FDR control at level $\alpha$
(Theorem~\ref{thm:fdr}), extend the guarantee to arbitrary dependence
(Corollary~\ref{cor:by_fdr}), and provide a proxy-to-causal risk
transfer bound
(Corollary~\ref{cor:proxy_transfer}).

\subsection{Counterfactual Approximation and Selection Structure}
\label{sec:theory_structure}
We first analyze the counterfactual construction stage.
For a fixed user $u$, let
 $\mathcal T_u(\delta)=\{j\in\mathcal I:d_0(j,\zc^{(u)})\leq\delta\}$ denote the trust-radius acceptance event. Recall that $P^\star$, $P^A$, and $P^M$ denote the counterfactual target, anchor-projected distribution, and LLM realization defined in Eq.~\eqref{eq:counterfactual_distributions}. We compare the target and realized distributions after applying the same acceptance event.

\begin{proposition}[Conditional error propagation under trust-radius filtering]
\label{thm:cf_validity}
Assume $P^\star(\mathcal T_u(\delta))>0$ and $P^M(\mathcal T_u(\delta))>0$. Let
\[
P_\delta^\star=P^\star(\cdot\mid\mathcal T_u(\delta)),\qquad
P_\delta^M=P^M(\cdot\mid\mathcal T_u(\delta)),
\]
where $\mathrm{TV}$ denotes total variation distance, and let
$\epsilon_A=\mathrm{TV}(P^\star,P^A)$, $\epsilon_M=\mathrm{TV}(P^A,P^M)$, and $\rho_\delta=\min\{P^\star(\mathcal T_u(\delta)),P^M(\mathcal T_u(\delta))\}$. Then
\begin{equation*}
\mathrm{TV}(P_\delta^\star,P_\delta^M)
\leq
\frac{2(\epsilon_A+\epsilon_M)}{\rho_\delta}.
\end{equation*}
For the causal-misalignment event $\mathcal M_u=\{j\in\mathcal I:G_{uj}=0\}$,
\begin{equation*}
P_\delta^M(\mathcal M_u)
\leq
P_\delta^\star(\mathcal M_u)
+
\frac{2(\epsilon_A+\epsilon_M)}{\rho_\delta}.
\end{equation*}
\end{proposition}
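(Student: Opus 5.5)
The proof proceeds in three steps. First, the triangle inequality for total variation removes the intermediate anchor-projected distribution. Second, a general lemma bounds how TV distance can grow under conditioning on a common event. Third, the misalignment bound follows from the definition of TV. Throughout, $u$, $\hat\ze$ and $\delta$ are fixed, so $P^\star$, $P^A$, $P^M$ are probability distributions on the finite catalog $\mathcal I$. Write $T=\mathcal T_u(\delta)$ and $\epsilon=\epsilon_A+\epsilon_M$.

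\emph{Step 1: triangle inequality.} Since TV is a metric, $\mathrm{TV}(P^\star,P^M)\le \mathrm{TV}(P^\star,P^A)+\mathrm{TV}(P^A,P^M)=\epsilon$. It then suffices to show, for any two distributions $P,Q$ and any event $T$ with $P(T),Q(T)>0$, that
\[
\mathrm{TV}\bigl(P(\cdot\mid T),Q(\cdot\mid T)\bigr)\le \frac{2\,\mathrm{TV}(P,Q)}{\min\{P(T),Q(T)\}}.
\]

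\emph{Step 2: conditioning lemma.} I expect this to be the main step, although it is elementary. Without loss of generality assume $P(T)\ge Q(T)$, so the denominator is $Q(T)$. For any $A\subseteq\mathcal I$, write
\[
P(A\mid T)-Q(A\mid T)=\frac{P(A\cap T)-Q(A\cap T)}{P(T)}+Q(A\cap T)\Bigl(\frac{1}{P(T)}-\frac{1}{Q(T)}\Bigr).
\]
For the first term, $|P(A\cap T)-Q(A\cap T)|\le \mathrm{TV}(P,Q)$, and $P(T)\ge\rho_\delta$. For the second term, $Q(A\cap T)\le Q(T)$, so its absolute value is at most $Q(T)\,|Q(T)-P(T)|/(P(T)Q(T))=|P(T)-Q(T)|/P(T)$. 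This is at most $\mathrm{TV}(P,Q)/\rho_\delta$. Taking the supremum over $A$ gives the factor $2$. The only care needed is to choose the decomposition so that both denominators are bounded below by the minimum mass; splitting around the larger of $P(T),Q(T)$ does this. Setting $P=P^\star$, $Q=P^M$ and using Step 1 yields the first claim.

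\emph{Step 3: misalignment bound.} By definition, $\mathrm{TV}(P_\delta^\star,P_\delta^M)=\sup_{A}|P_\delta^\star(A)-P_\delta^M(A)|$. Taking $A=\mathcal M_u$ gives
\[
P_\delta^M(\mathcal M_u)\le P_\delta^\star(\mathcal M_u)+\mathrm{TV}(P_\delta^\star,P_\delta^M),
\]
and the first claim bounds the last term. No structure of $G_{uj}$ is used; the second claim holds for any event.
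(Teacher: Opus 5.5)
Your proposal is correct and matches the paper's proof: the triangle inequality for $\mathrm{TV}$, the same add-and-subtract identity for $P(A\mid T)-Q(A\mid T)$ with each of the two terms bounded by $\mathrm{TV}(P,Q)/\rho_\delta$, and then taking $A=\mathcal M_u$ in the definition of $\mathrm{TV}$. Your WLOG $P(T)\ge Q(T)$ is harmless but unnecessary, because after your own simplification both terms carry the denominator $P(T)$, which is at least $\rho_\delta$ either way; with your choice you actually get the sharper denominator $\max\{P(T),Q(T)\}$.
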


The proof is provided in
Appendix~\ref{app:proof_cf_validity}.
Proposition~\ref{thm:cf_validity} quantifies construction fidelity
after trust-radius filtering. The terms $\epsilon_A$ and
$\epsilon_{M}$ isolate the errors from anchor projection
and LLM realization, while $\rho_\delta$ records the probability mass
retained by the filter. The result links the realized counterfactual
distribution to the target intervention on the accepted region.

We next analyze the structure of the calibrated served sets.
For a serving batch $\mathcal B$, let
$\mathcal Q(\mathcal B)$ contain $N$ candidate pairs.
Write their ordered conformal $p$-values as
$p_{(1)}\leq\cdots\leq p_{(N)}$, and let $\mathcal S_k$ contain the
$k$ pairs with the smallest $p$-values. This family provides the
candidate sets over which the BH rule operates.

\begin{lemma}[Nestedness and step-up maximality]
\label{lem:bh_nested}
The ordered candidate sets form a nested family: $\mathcal S_0 \subseteq\mathcal S_1\subseteq\cdots \subseteq\mathcal S_N.$ For a risk level $\alpha\in(0,1)$, define
\begin{equation*}
\widehat k_\alpha
=
\max\left\{
k:
p_{(k)}
\leq
\frac{\alpha k}{N}
\right\},
\end{equation*}
with $\widehat k_\alpha=0$ when the set is empty.
Then
$\mathcal S_\alpha=\mathcal S_{\widehat k_\alpha}$ is the largest
member of the nested family satisfying the BH step-up condition.
Moreover, for any $0<\alpha_1\leq\alpha_2<1$,
\begin{equation*}
\mathcal S_{\alpha_1}
\subseteq
\mathcal S_{\alpha_2}.
\end{equation*}
\end{lemma}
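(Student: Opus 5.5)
The plan is to work directly from the definitions. The argument has three short steps: nestedness, the identification $\mathcal S_\alpha=\mathcal S_{\widehat k_\alpha}$ together with maximality, and monotonicity in $\alpha$. Ties need care throughout. I fix a deterministic tie-breaking rule once, for example lexicographic order on $(u,j)$, so that the permutation $\pi$ realizing $p_{(1)}\le\cdots\le p_{(N)}$ is well defined and does not depend on $\alpha$. I then set $\mathcal S_k=\{\pi(1),\dots,\pi(k)\}$.

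\textbf{Nestedness.} Since $\mathcal S_k=\{\pi(1),\dots,\pi(k)\}$, we get $\mathcal S_k\subseteq\mathcal S_{k+1}$ immediately, with $\mathcal S_0=\emptyset$.

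\textbf{Identification and maximality.} Let $t=\alpha\widehat k_\alpha/N$ and compare Eq.~\eqref{eq:certified_set} with $\mathcal S_{\widehat k_\alpha}$.
\begin{itemize}
\item If $\widehat k_\alpha=0$, then no $p$-value satisfies $p\le 0$, because conformal $p$-values are at least $1/(1+|\mathcal D_{\mathrm{cal}}^{(0)}|)>0$. So both sets are empty.
\item Otherwise, $p_{(k)}\le t$ for all $k\le\widehat k_\alpha$, since $p_{(k)}\le p_{(\widehat k_\alpha)}\le t$. This gives $\mathcal S_{\widehat k_\alpha}\subseteq\mathcal S_\alpha$.
\item Conversely, suppose some $k>\widehat k_\alpha$ had $p_{(k)}\le t$. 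Then $p_{(k)}\le \alpha\widehat k_\alpha/N<\alpha k/N$, so $k$ would satisfy the step-up condition. This contradicts the maximality of $\widehat k_\alpha$.
\end{itemize}
Hence $\mathcal S_\alpha=\mathcal S_{\widehat k_\alpha}$. The same argument shows that the selected set contains every pair tied at level $t$, so tie-breaking never splits the selection.

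Maximality is then immediate. A member $\mathcal S_k$ of the family satisfies the BH condition exactly when $p_{(k)}\le\alpha k/N$. Every such $k$ satisfies $k\le\widehat k_\alpha$ by the definition of the maximum. By nestedness, $\mathcal S_k\subseteq\mathcal S_{\widehat k_\alpha}$. So $\mathcal S_{\widehat k_\alpha}$ is the largest admissible member, both in cardinality and under inclusion.

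\textbf{Monotonicity in $\alpha$.} Take $\alpha_1\le\alpha_2$. The condition $p_{(k)}\le\alpha_1k/N$ implies $p_{(k)}\le\alpha_2k/N$, so the admissible index set for $\alpha_1$ is contained in that for $\alpha_2$. Therefore $\widehat k_{\alpha_1}\le\widehat k_{\alpha_2}$. Because the ordering $\pi$ does not depend on $\alpha$, nestedness gives $\mathcal S_{\widehat k_{\alpha_1}}\subseteq\mathcal S_{\widehat k_{\alpha_2}}$. The identification step then converts this into $\mathcal S_{\alpha_1}\subseteq\mathcal S_{\alpha_2}$.

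The only delicate point is ties among $p$-values, which are common because conformal $p$-values take values on a discrete grid. Handling them is why the identification step is carried out via the threshold $t$ and not via ranks alone. The rest of the argument is elementary.
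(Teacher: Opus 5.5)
Your proposal is correct and follows the same three-step argument as the paper: nestedness because each sorted prefix $\mathcal S_{k+1}$ extends $\mathcal S_k$, maximality directly from the definition of $\widehat k_\alpha$ as the largest admissible index, and monotonicity because the admissible index set for $\alpha_1$ is contained in the one for $\alpha_2$ whenever $\alpha_1\le\alpha_2$. Your extra step is a sound refinement that the paper omits, since the paper simply takes $\mathcal S_\alpha=\mathcal S_{\widehat k_\alpha}$ as given; that step fixes a tie-breaking rule and verifies that the threshold set of Eq.~\eqref{eq:certified_set} coincides with $\mathcal S_{\widehat k_\alpha}$, including the case $\widehat k_\alpha=0$ via $p\ge 1/(1+n_0)>0$.
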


The proof is given in Appendix~\ref{app:proof_bh_nested}. Lemma~\ref{lem:bh_nested} establishes the size property used by GenCAR. At a fixed level $\alpha$, the method returns the largest set admitted by the BH step-up family. The nesting in $\alpha$ provides an ordered control over served-set size.

\subsection{Proxy-Label Risk Control}
\label{sec:theory_risk}
We now present the main theoretical guarantee of GenCAR, and show that
conformal evidence computed from a finite proxy-null calibration set
controls the expected proxy-misalignment fraction in an unseen serving
batch.

Let $n_0=|\mathcal D_{\mathrm{cal}}^{(0)}|$. The same $V$ and $p$ from Eq.~\eqref{eq:method_pvalue} are used in the algorithm and guarantee.
The first result establishes the finite-sample validity of
this evidence under the proxy-misalignment null.

\begin{lemma}[Finite-sample proxy-null validity]
\label{lem:pvalue_validity}
Suppose that the alignment predictor $\hat h$ and candidate family
$\mathcal Q(\mathcal B)$ are constructed independently of
$\mathcal D_{\mathrm{cal}}^{(0)}$. For every $(u,j)\in\mathcal H_0^{\mathrm{proxy}}$, assume that its
nonconformity score and the $n_0$ proxy-null calibration scores are
exchangeable, conditional on the learned artifacts and candidate
family. The $p$-value is the weak-rank statistic defined in
Eq.~\eqref{eq:method_pvalue}.

Then, for every $t\in[0,1]$,
\begin{equation*}
\Pr\!\left(
p(u,j)\leq t
\mid
\hat h,\mathcal Q(\mathcal B)
\right)
\leq t.
\end{equation*}
Thus, every proxy-null conformal $p$-value is super-uniform.
\end{lemma}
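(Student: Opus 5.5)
Since $\hat h$ and $\mathcal Q(\mathcal B)$ do not depend on $\mathcal D_{\mathrm{cal}}^{(0)}$, I would condition on $\hat h$ and $\mathcal Q(\mathcal B)$ throughout and treat $V$ as a fixed function. Fix a proxy-null pair $(u,j)\in\mathcal H_0^{\mathrm{proxy}}$. Write $V_0=V(u,j)$ and $V_1,\dots,V_{n_0}$ for the calibration scores. By assumption, $(V_0,V_1,\dots,V_{n_0})$ is exchangeable given these artifacts. If $n_0$ is itself random, I would also condition on $n_0$ and average at the end.

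The main step is a rank argument that handles ties. Define the weak rank
\[
R=\sum_{i=0}^{n_0}\mathbbm{1}[V_i\leq V_0]=1+\sum_{i=1}^{n_0}\mathbbm{1}[V_i\leq V_0],
\]
so that $p(u,j)=R/(n_0+1)$. It suffices to show $\Pr(R\leq k)\leq k/(n_0+1)$ for each integer $k\in\{0,\dots,n_0+1\}$. For general $t\in[0,1]$, set $k=\lfloor t(n_0+1)\rfloor$; then $\{p\leq t\}=\{R\leq k\}$ and $k/(n_0+1)\leq t$.

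To bound $\Pr(R\leq k)$, define $R_m=\sum_{i=0}^{n_0}\mathbbm{1}[V_i\leq V_m]$ for each $m$. By exchangeability, all $R_m$ have the same distribution, so
\[
\Pr(R_0\leq k)=\frac{1}{n_0+1}\,\mathbb E\Bigl[\#\{m:R_m\leq k\}\Bigr].
\]
Deterministically, $\#\{m:R_m\leq k\}\leq k$. To see this, suppose $m^\ast$ attains the largest $V_m$ among indices with $R_m\leq k$. Every index $m$ in that set has $V_m\leq V_{m^\ast}$, so the set has at most $R_{m^\ast}\leq k$ elements. Combining the two displays gives $\Pr(R\leq k)\leq k/(n_0+1)$, which is the conditional super-uniformity claimed.

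I expect the main obstacle to be ties, since the $p$-value uses weak inequality and no randomization. The counting argument above handles ties directly: ties only make ranks larger and $p$ more conservative. The remaining point needing care is that the exchangeability assumption covers the tested score together with the calibration scores conditional on $\mathcal Q(\mathcal B)$. This is exactly the hypothesis stated, so no further independence argument is required.
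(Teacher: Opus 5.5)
Your proposal is correct and follows essentially the same route as the paper: the weak rank $R$ (the paper's upper rank $R^{+}$), the count that at most $k$ indices have weak rank at most $k$, exchangeability to get $\Pr(R\leq k)\leq k/(n_0+1)$, and the choice $k=\lfloor t(n_0+1)\rfloor$. The only differences are cosmetic: the paper conditions on the multiset of scores, whereas you average over indices using the equal marginal laws of the $R_m$. You also prove the deterministic counting step via the maximal $V_{m^\ast}$, which the paper only asserts.
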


The proof is provided in Appendix~\ref{app:proof_pvalue_validity}. The weak-rank construction in Eq.~\eqref{eq:method_pvalue} is
super-uniform under exchangeability and remains conservative when scores contain ties. Lemma~\ref{lem:bh_nested} and Lemma~\ref{lem:pvalue_validity} provide the two ingredients for calibrated selection. Nestedness identifies the largest BH-admissible set, while proxy-null validity supplies the evidence required for FDR control.

\begin{theorem}[Proxy-label FDR control]
\label{thm:fdr}
Condition on the learned artifacts and the candidate family
$\mathcal Q(\mathcal B)$. Let
\[
N
=
|\mathcal Q(\mathcal B)|,
\quad
m_0
=
|\mathcal Q(\mathcal B)
\cap\mathcal H_0^{\mathrm{proxy}}|.
\]
Assume that the conditions of
Lemma~\ref{lem:pvalue_validity} hold and that the conformal
$p$-value vector is positive regression dependent on a subset (PRDS) of
proxy-null hypotheses, conditional on the candidate family. Then BH
at level $\alpha$ returns
$\mathcal S_\alpha(\mathcal B)$ satisfying
\begin{equation*}
\mathbb E\!\left[
\frac{
|\mathcal S_\alpha(\mathcal B)
\cap\mathcal H_0^{\mathrm{proxy}}|
}{
\max\{|\mathcal S_\alpha(\mathcal B)|,1\}
}
\,\middle|\,
\hat h,\mathcal Q(\mathcal B)
\right]
\leq
\frac{m_0}{N}\alpha
\leq
\alpha.
\end{equation*}
Consequently,
\begin{equation*}
\mathrm{FDR}_{\mathrm{proxy}}
\bigl(\mathcal S_\alpha(\mathcal B)\bigr)
\leq \alpha.
\end{equation*}
\end{theorem}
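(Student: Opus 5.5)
The proof follows the standard Benjamini--Yekutieli argument for BH under PRDS, run conditionally on $(\hat h,\mathcal Q(\mathcal B))$ with the conformal $p$-values as inputs. Throughout, fix the learned artifacts and the candidate family, so $N$, $m_0$, and the proxy-null index set $\mathcal H_0:=\mathcal Q(\mathcal B)\cap\mathcal H_0^{\mathrm{proxy}}$ are deterministic. By Lemma~\ref{lem:bh_nested}, $\mathcal S_\alpha(\mathcal B)=\mathcal S_{\widehat k_\alpha}$ and a pair $i$ is selected iff $p_i\le \alpha\widehat k_\alpha/N$.

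First decompose the FDP as a sum over nulls and over possible rejection counts:
\[
\mathrm{FDP}_{\mathrm{proxy}}=\sum_{i\in\mathcal H_0}\sum_{k=1}^{N}\frac{1}{k}\mathbbm{1}\bigl[p_i\le \alpha k/N,\ \widehat k_\alpha=k\bigr].
\]
Taking conditional expectations, it suffices to show that for each $i\in\mathcal H_0$,
\[
\sum_{k=1}^N \frac{1}{k}\Pr\bigl(p_i\le \alpha k/N,\ \widehat k_\alpha=k\mid\cdot\bigr)\le \frac{\alpha}{N}.
\]
Summing this over the $m_0$ nulls gives the bound $m_0\alpha/N\le\alpha$. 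The unconditional statement $\mathrm{FDR}_{\mathrm{proxy}}\le\alpha$ then follows by the tower property.

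To establish the per-null bound, write $t_k=\alpha k/N$ and use the events $D_k=\{\widehat k_\alpha\le k\}$, which are increasing sets in the $p$-value vector: raising any $p$-value cannot increase the BH rejection count. Then express $\{\widehat k_\alpha=k\}=D_k\setminus D_{k-1}$ and rearrange via Abel summation, using $1/k$ and the identity $t_k/k=\alpha/N$. This reduces the claim to monotonicity of $q_i(t)=\Pr(D_k\mid p_i\le t)$ in $t$, together with super-uniformity $\Pr(p_i\le t_k)\le t_k$. The latter is exactly Lemma~\ref{lem:pvalue_validity}, applied conditionally on $\hat h,\mathcal Q(\mathcal B)$. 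The monotonicity comes from the PRDS assumption on the proxy-null subset: for increasing $D$, $\Pr(D\mid p_i\le t)$ is nondecreasing in $t$. This is the usual Lehmann-type consequence of PRDS, in which conditioning on $\{p_i=s\}$ is converted to conditioning on $\{p_i\le t\}$.

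The main obstacle is that conformal $p$-values are discrete and share calibration scores, which makes two points delicate. The first is making the PRDS-to-$\{p_i\le t\}$ monotonicity rigorous for discrete laws. I would handle it by working directly with the conditional PRDS hypothesis stated in the theorem, applied to the events $D_k$, rather than rederiving PRDS from exchangeability. The second is ensuring super-uniformity is used only at the thresholds $t_k$, which Lemma~\ref{lem:pvalue_validity} permits for every $t\in[0,1]$. With these two ingredients, the telescoping bound $\sum_k \frac{1}{k}\Pr(p_i\le t_k,\widehat k_\alpha=k)\le \alpha/N$ closes the argument.
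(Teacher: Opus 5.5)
Your proposal is correct and follows the paper's route. You condition on $(\hat h,\mathcal Q(\mathcal B))$, combine the super-uniformity of Lemma~\ref{lem:pvalue_validity} with the conditional PRDS assumption in the Benjamini--Yekutieli argument to get the $m_0\alpha/N$ bound, and finish with the tower property. The only difference is that the paper cites the BH-under-PRDS theorem of Benjamini and Yekutieli directly, whereas you unpack its proof via the decomposition over $\{\widehat k_\alpha=k\}$, the increasing events $\{\widehat k_\alpha\le k\}$, and telescoping. That unpacking is sound, because each increment $\Pr(\widehat k_\alpha=k\mid p_i\le \alpha k/N)$ is nonnegative, so super-uniformity rather than exact uniformity suffices.
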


The proof is given in Appendix~\ref{app:proof_fdr}. Theorem~\ref{thm:fdr} controls the proxy-label risk at any finite
calibration size without specifying a parametric null-score
distribution. Shared-calibration marginal conformal $p$-values
satisfy PRDS under jointly independent and continuously distributed
null scores~\cite{bates2023testing}.

\begin{remark}[Interpretation of the guarantee]
\label{rem:prds}
The certified object in Theorem~\ref{thm:fdr} is the pooled
candidate-pair set $\mathcal S_\alpha(\mathcal B)$. Each user-level
list inherits the membership decisions of this set. The guarantee
controls the expected proxy-misalignment fraction over the serving
batch.
\end{remark}

\begin{corollary}[Arbitrary-dependence control]
\label{cor:by_fdr}
Under the conditions of Lemma~\ref{lem:pvalue_validity}, define
$H_N=\sum_{k=1}^{N}\frac{1}{k}.$ Applying the Benjamini--Yekutieli rule at level
$\alpha/H_N$ yields
\begin{equation*}
\mathrm{FDR}_{\mathrm{proxy}}
\bigl(\mathcal S_{\alpha}^{\mathrm{BY}}(\mathcal B)\bigr)
\leq
\alpha
\end{equation*}
under arbitrary dependence among the candidate $p$-values.
\end{corollary}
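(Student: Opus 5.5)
The proof follows the standard Benjamini--Yekutieli argument. It uses only the marginal super-uniformity of the proxy-null $p$-values from Lemma~\ref{lem:pvalue_validity}, with no dependence assumption. Throughout we condition on $\hat h$ and $\mathcal Q(\mathcal B)$, as in Theorem~\ref{thm:fdr}, and then take an outer expectation at the end. Write $\alpha' = \alpha/H_N$. The BY rule is BH applied at level $\alpha'$: set $\widehat k = \max\{k : p_{(k)} \le \alpha' k/N\}$ and $R = |\mathcal S^{\mathrm{BY}}_\alpha(\mathcal B)| = \widehat k$. By Lemma~\ref{lem:bh_nested}, this set consists exactly of the pairs with $p(u,j) \le \alpha' R/N$. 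We decompose the FDP into null indicators:
\[
\mathrm{FDP}_{\mathrm{proxy}} = \sum_{(u,j)\in\mathcal H_0^{\mathrm{proxy}}} \sum_{k=1}^N \frac{1}{k}\,\mathbbm{1}[R=k,\ p(u,j)\le \alpha' k/N].
\]
It therefore suffices to bound, for each fixed null pair $(u,j)$ with $p = p(u,j)$, the quantity $\sum_{k} \frac1k \Pr(R=k,\ p\le \alpha' k/N)$ by $\alpha'/N \cdot H_N = \alpha/N$. Summing this bound over the $m_0\le N$ nulls then gives $\mathrm{FDR}\le m_0\alpha/N\le\alpha$.

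To bound this per-null sum, we partition the range of $p$ into the bins $I_1 = [0, \alpha'/N]$ and $I_\ell = (\alpha'(\ell-1)/N, \alpha'\ell/N]$ for $\ell \ge 2$. On the event $\{R=k,\ p\le\alpha' k/N\}$, the value $p$ falls in some bin $I_\ell$ with $\ell\le k$. This gives
\[
\sum_{k=1}^N \frac1k \Pr(R=k,\ p\le \alpha' k/N) = \sum_{k=1}^N\sum_{\ell=1}^k \frac1k \Pr(R=k,\ p\in I_\ell) \le \sum_{\ell=1}^N \frac1\ell \sum_{k\ge \ell}\Pr(R=k,\ p\in I_\ell),
\]
where the inequality uses $1/k\le 1/\ell$. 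The inner sum is at most $\Pr(p\in I_\ell)$. Abel summation then rewrites $\sum_\ell \frac1\ell \Pr(p\in I_\ell)$ as
\[
\sum_{\ell=1}^{N} \Bigl(\frac1\ell - \frac1{\ell+1}\Bigr)\Pr\bigl(p\le \alpha'\ell/N\bigr) + \frac{1}{N+1}\Pr\bigl(p\le \alpha'\bigr),
\]
with the convention that the $\ell=N$ term of the telescoping is handled by this boundary term. Every coefficient here is nonnegative. Lemma~\ref{lem:pvalue_validity} gives $\Pr(p\le t)\le t$, and substituting $t=\alpha'\ell/N$ collapses the expression to at most $\frac{\alpha'}{N}\sum_{\ell=1}^N \frac1\ell = \alpha'H_N/N = \alpha/N$.

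The main point requiring care is the final sum. The Abel-summation step needs nonnegative weights so that we can replace each $\Pr(p\le\cdot)$ by its super-uniform upper bound. It also needs the index bookkeeping to run only up to $N$, since $R\le N$; the boundary term at $\ell=N$ must be checked so that the total is exactly $H_N\alpha'/N$ and not larger. No joint structure of the $p$-values is used, because the only step coupling $R$ and $p$ is the trivial bound $\sum_{k\ge\ell}\Pr(R=k,\cdot)\le\Pr(\cdot)$. This is why the result holds under arbitrary dependence. Ties in the weak-rank $p$-values cause no difficulty, since Lemma~\ref{lem:pvalue_validity} already covers them. Finally, because the conditional bound $\alpha$ holds for every realization of $\hat h$ and $\mathcal Q(\mathcal B)$, the tower property gives the unconditional bound $\mathrm{FDR}_{\mathrm{proxy}}(\mathcal S^{\mathrm{BY}}_\alpha(\mathcal B))\le\alpha$.
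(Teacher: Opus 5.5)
Your proposal is correct and follows the same route as the paper. Both condition on $\hat h$ and $\mathcal Q(\mathcal B)$, apply the Benjamini--Yekutieli general-dependence bound to the super-uniform proxy-null $p$-values of Lemma~\ref{lem:pvalue_validity}, and finish with the tower property. The paper simply cites the BY theorem, while you reprove it; your binning step, the $1/k\le 1/\ell$ bound, and the Abel summation all check out, giving $\alpha/N$ per null and hence $m_0\alpha/N\le\alpha$, and the reproof makes explicit that marginal super-uniformity, including the tie-conservative weak-rank case, is all that is needed.
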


The proof is included in Appendix~\ref{app:proof_fdr}.
Corollary~\ref{cor:by_fdr} provides the same proxy-label certificate
without a dependence restriction. GenCAR uses BH under PRDS and BY
under arbitrary dependence.

\begin{corollary}[Proxy-to-causal risk transfer]
\label{cor:proxy_transfer}
Let $\mathcal H_0^{\mathrm{causal}}
=
\left\{
(u,j)\in\mathcal Q(\mathcal B):
G_{uj}=0
\right\}$ denote the causal-misalignment null. Define
\begin{equation*}
\mathrm{FDR}_{\mathrm{causal}}(\mathcal S)
=
\mathbb E\!\left[
\frac{
|\mathcal S\cap\mathcal H_0^{\mathrm{causal}}|
}{
\max\{|\mathcal S|,1\}
}
\right]
\end{equation*}
and
\begin{equation*}
\Delta_{\mathrm{proxy}}(\mathcal S)
=
\mathbb E\!\left[
\frac{
|\mathcal S\cap
(\mathcal H_0^{\mathrm{causal}}
\setminus
\mathcal H_0^{\mathrm{proxy}})|
}{
\max\{|\mathcal S|,1\}
}
\right].
\end{equation*}
The BH set in Theorem~\ref{thm:fdr} satisfies
\begin{equation*}
\mathrm{FDR}_{\mathrm{causal}}
\bigl(\mathcal S_\alpha(\mathcal B)\bigr)
\leq
\alpha
+
\Delta_{\mathrm{proxy}}
\bigl(\mathcal S_\alpha(\mathcal B)\bigr).
\end{equation*}
If
$\mathcal H_0^{\mathrm{causal}}
\subseteq
\mathcal H_0^{\mathrm{proxy}}$,
then
$\Delta_{\mathrm{proxy}}
(\mathcal S_\alpha(\mathcal B))=0$ and
$\mathrm{FDR}_{\mathrm{causal}}
\bigl(\mathcal S_\alpha(\mathcal B)\bigr)\leq\alpha.$
\end{corollary}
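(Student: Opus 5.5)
The plan is a pointwise decomposition of the causal false-discovery count, followed by linearity of expectation and Theorem~\ref{thm:fdr}. For any served set $\mathcal S$, partition $\mathcal H_0^{\mathrm{causal}}$ into the part lying inside $\mathcal H_0^{\mathrm{proxy}}$ and the part lying outside it:
\[
\mathcal S\cap\mathcal H_0^{\mathrm{causal}}
=
\bigl(\mathcal S\cap\mathcal H_0^{\mathrm{causal}}\cap\mathcal H_0^{\mathrm{proxy}}\bigr)
\cup
\bigl(\mathcal S\cap(\mathcal H_0^{\mathrm{causal}}\setminus\mathcal H_0^{\mathrm{proxy}})\bigr).
\]
The union is disjoint. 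Moreover, the first piece is contained in $\mathcal S\cap\mathcal H_0^{\mathrm{proxy}}$. Hence, for every realization,
\[
|\mathcal S\cap\mathcal H_0^{\mathrm{causal}}|
\leq
|\mathcal S\cap\mathcal H_0^{\mathrm{proxy}}|
+
|\mathcal S\cap(\mathcal H_0^{\mathrm{causal}}\setminus\mathcal H_0^{\mathrm{proxy}})|.
\]

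Next, divide both sides by $\max\{|\mathcal S|,1\}$, which is positive, so the inequality is preserved pointwise. This gives
\[
\mathrm{FDP}_{\mathrm{causal}}(\mathcal S)
\leq
\mathrm{FDP}_{\mathrm{proxy}}(\mathcal S)
+
\frac{|\mathcal S\cap(\mathcal H_0^{\mathrm{causal}}\setminus\mathcal H_0^{\mathrm{proxy}})|}{\max\{|\mathcal S|,1\}}.
\]
Set $\mathcal S=\mathcal S_\alpha(\mathcal B)$ and take expectations over the same randomness used in the definitions of $\mathrm{FDR}_{\mathrm{causal}}$ and $\Delta_{\mathrm{proxy}}$. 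By linearity and monotonicity of expectation,
\[
\mathrm{FDR}_{\mathrm{causal}}\bigl(\mathcal S_\alpha(\mathcal B)\bigr)
\leq
\mathrm{FDR}_{\mathrm{proxy}}\bigl(\mathcal S_\alpha(\mathcal B)\bigr)
+
\Delta_{\mathrm{proxy}}\bigl(\mathcal S_\alpha(\mathcal B)\bigr).
\]
Theorem~\ref{thm:fdr} bounds the first term by $\alpha$, which gives the stated inequality.

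The one point that needs care is the expectation in Theorem~\ref{thm:fdr}. That theorem bounds a conditional expectation given $\hat h$ and $\mathcal Q(\mathcal B)$, whereas the corollary is stated with unconditional expectations. I would either run the whole argument conditionally and then integrate by the tower property, or simply invoke the unconditional consequence $\mathrm{FDR}_{\mathrm{proxy}}\leq\alpha$ already stated in the theorem. Either way the bound passes through unchanged. All random quantities involved are bounded in $[0,1]$, so there are no integrability issues.

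For the special case $\mathcal H_0^{\mathrm{causal}}\subseteq\mathcal H_0^{\mathrm{proxy}}$, the difference set $\mathcal H_0^{\mathrm{causal}}\setminus\mathcal H_0^{\mathrm{proxy}}$ is empty in every realization. The numerator in $\Delta_{\mathrm{proxy}}$ is therefore identically zero, so $\Delta_{\mathrm{proxy}}(\mathcal S_\alpha(\mathcal B))=0$ and the bound reduces to $\alpha$. The same argument, combined with Corollary~\ref{cor:by_fdr}, gives the analogous statement for the BY set.
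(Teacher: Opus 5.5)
Your proposal is correct and follows the paper's proof essentially verbatim: the pointwise count inequality $|\mathcal S\cap\mathcal H_0^{\mathrm{causal}}|\leq|\mathcal S\cap\mathcal H_0^{\mathrm{proxy}}|+|\mathcal S\cap(\mathcal H_0^{\mathrm{causal}}\setminus\mathcal H_0^{\mathrm{proxy}})|$, division by $\max\{|\mathcal S|,1\}$, taking expectations, applying Theorem~\ref{thm:fdr}, and noting that the difference set is empty in the special case. Your remarks on conditional versus unconditional expectation and on the BY analogue are harmless refinements beyond what the paper writes.
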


The proof is provided in Appendix~\ref{app:proof_proxy_transfer}.
Corollary~\ref{cor:proxy_transfer} isolates the causal nulls missed by the observable proxy. Proposition~\ref{thm:cf_validity} supports the counterfactual construction stage. Theorem~\ref{thm:fdr} answers the $\alpha$-VCR question for proxy-label risk, while Corollary~\ref{cor:proxy_transfer} translates this certificate into a
causal-risk bound through $\Delta_{\mathrm{proxy}}$.


\section{Experiments}
\label{sec:experiments}
In this section, we present the experimental results to validate four claims: (i) calibrated selection keeps
realized proxy FDP below \(\alpha\) on the evaluated diagnostic pools; (ii) preference-grounded counterfactual supervision improves top-10 OOD candidate recovery; (iii) the gains arise from counterfactual content, preference grounding, and trust-radius filtering; and (iv) GenCAR preserves an LLM-free, sub-millisecond serving path under full and partial offline coverage.

\subsection{Experimental Setup}
\label{sec:exp_setup}

\noindent\textbf{Datasets.} We evaluate GenCAR under three common forms of recommendation shift.
ML-100K~\cite{harper2015movielens} represents temporal shift, Coat~\cite{schnabel2016recommendations} represents exposure shift, and Amazon-Book~\cite{he2016ups} represents popularity shift. These
three datasets form the main ranking evaluation. We further include MovieLens-1M~\cite{harper2015movielens} and Amazon-Beauty~\cite{he2016ups}
to evaluate calibration across larger temporal and popularity-shift pools. Steam~\cite{kang2018sasrec} is used to study the setting in which offline LLM generation is available for only a subset of users. Table~\ref{tab:datasets} summarizes the datasets and their shift types.
\begin{table}
		\centering
		\caption{Dataset statistics for the ranking and calibration
evaluations. T, E, and P denote temporal, exposure, and popularity
shifts, respectively.}
		\label{tab:datasets}
		\small
		\setlength{\tabcolsep}{2pt}
		\begin{tabular}{lrrrrc}
			\toprule
			\textbf{Dataset} & \textbf{\#Users} & \textbf{\#Items} & \textbf{\#Inter.}
			                 & \textbf{Density} & \textbf{Shift}                                     \\
			\midrule
			MovieLens(ML)-100K   & 942              & 1{,}447          & 55{,}375          & 4.058\% & T \\
			Coat             & 290              & 300              & 6{,}960           & 8.000\% & E \\
			Amazon-Book      & 10{,}000         & 89{,}911         & 1{,}550{,}000     & 0.172\% & P \\
			Amazon-Beauty    & 22{,}363         & 12{,}101         & 198{,}502         & 0.073\% & P \\
			MovieLens(ML)-1M            & 6{,}040          & 3{,}706          & 1{,}000{,}209     & 4.468\% & T \\
			\bottomrule
		\end{tabular}
\end{table}

\noindent\textbf{Baselines and evaluation metrics.} We compare GenCAR with the standard BPRMF ranker~\cite{rendle2009bpr},
propensity-based methods including CausE~\cite{bonner2018cause},
MACR~\cite{wei2021model}, and IPS-CN~\cite{schnabel2016recommendations},
and representation-based OOD recommenders including
DICE~\cite{zheng2021disentangling},
InvCF~\cite{zhang2023invcf}, and CDR~\cite{wang2023cor}.
We also include CausalDiffRec~\cite{zhao2025causaldiffrec} as a
generative baseline. TallRec~\cite{bao2023tallrec} provides a separate
reference for methods that invoke an LLM during serving. The CausalVAE
backbone isolates the contribution of GenCAR's offline counterfactual
pipeline on ML-100K and Coat. We report Recall@$K$ and normalized discounted cumulative gain (NDCG)@$K$ for
$K\in\{10,20\}$. Recall@10 is the primary metric because GenCAR targets
the recovery of useful candidates near the top of the served list. Inverse propensity score (IPS)-weighted evaluation measures the same utility after correcting for
exposure frequency. Ranking results are averaged over seeds 2024,
2025, and 2026. All methods use the same shift-specific 80/10/10
interaction split and evaluation protocol. Model selection is performed on the validation split, and the reported results are computed on the held-out test
split.

\noindent\textbf{Implementation.} All models are implemented in PyTorch and trained on NVIDIA L20 graphics processing units (GPUs). We use Adam with a learning rate of $10^{-3}$ and weight decay of $10^{-4}$. The CausalVAE encoder produces factorized
posteriors with $\dim(\zc)=64$ and $\dim(\ze)=\dim(\eta)=16$ from a shared 256-dimensional hidden representation. We set $\lambda_{\mathrm{BPR}}=1.0$,
$\lambda_{\mathrm{disent}}=0.05$, and the Kullback--Leibler (KL) weight $\lambda_{\mathrm{KL}}=0.2$. The disentanglement loss is computed with 256 sampled users per update. The \emph{fixkl} configuration stabilizes the preference and environment factors before counterfactual generation.  We use $K_a=5$ preference anchors and set the default trust radius to $\delta=0.7$. DeepSeek-Chat generates the counterfactual candidates, and prompts are cached for each dataset and seed. The LLM is used
only during offline training. Amazon-Book contains sparse long-tail interactions that provide weak item representations. We initialize its item embeddings with BAAI General Embedding (BGE)-small-en text representations projected by principal component analysis. An anchor penalty with $\lambda_{\mathrm{anc}}=5$ keeps the fine-tuned embeddings close to this text prior. The proxy-alignment threshold is $\tau=0.85$, and the reported risk audit uses BH at $\alpha=0.30$. Alignment fitting and calibration use disjoint users, so no user contributes to both predictor fitting and the null reference distribution.

\subsection{Risk Control and Calibration}
\label{sec:exp_conf}
The first experiment audits calibrated selection on five constructed diagnostic pools, separately from the ranker-level candidate-recovery evaluation in Sec.~\ref{sec:exp_main}. Table~\ref{tab:conformal_coverage} evaluates realized proxy FDP and retained support, whereas Sec.~\ref{sec:exp_main} evaluates top-$K$ OOD recovery. Proxy-alignment fitting and null calibration use disjoint users in every dataset, and BH at $\alpha=0.30$ is applied only to the held-out test pool. Table~\ref{tab:conformal_coverage} reports both sides of the resulting selection profile: realized proxy FDP and the amount of retained support.

\begin{table}
\centering
\caption{Realized proxy FDP and proxy-alignment rate under BH selection.}
\label{tab:conformal_coverage}
\small
\setlength{\tabcolsep}{4pt}
\begin{tabular}{lccccc}
\toprule
\textbf{Dataset} & \boldmath$\alpha$ & \textbf{$n_{\mathrm{test}}$}
& \textbf{$n_{\mathrm{sel}}$}
& \boldmath$\widehat{\mathrm{FDP}}$
& \boldmath$1-\widehat{\mathrm{FDP}}$ \\
\midrule
Coat        & 0.30 & 1{,}376   & 16    & 0.188 & 0.812 \\
ML-100K     & 0.30 & 16{,}614  & 241   & 0.195 & 0.805 \\
ML-1M       & 0.30 & 172{,}586 & 3{,}891 & 0.089 & 0.911 \\
Beauty      & 0.30 & 21{,}840  & 1{,}487 & 0.054 & 0.946 \\
Amazon-Book & 0.30 & 559{,}658 & 7{,}665 & 0.048 & 0.952 \\
\bottomrule
\end{tabular}
\end{table}

\noindent\textbf{GenCAR keeps realized proxy FDP below \(\alpha\) on all five candidate pools.} At \(\alpha=0.30\), Table~\ref{tab:conformal_coverage} reports realized proxy FDP values of \(0.188\), \(0.195\), \(0.089\), \(0.054\), and \(0.048\) on Coat, ML-100K, ML-1M, Amazon-Beauty, and
Amazon-Book, respectively. The selector retains between 16 and 7,665 user--item pairs. The corresponding retention rates are \(1.16\%\), \(1.45\%\), \(2.25\%\), \(6.81\%\), and \(1.37\%\), showing how retained support varies with pool scale and calibration support at a fixed risk level.

\noindent\textbf{Positive calibration support tracks BH retention.} The variation in retained support across pools motivates Fig.~\ref{fig:conformal_analysis}(a), which relates the observed BH pass rate to the number of positive calibration pairs $n_{\mathrm{cal}+}$. Across the five pools, larger positive support is associated with higher retention at the same $\alpha$.

\begin{figure}
\centering
\includegraphics[width=\columnwidth]{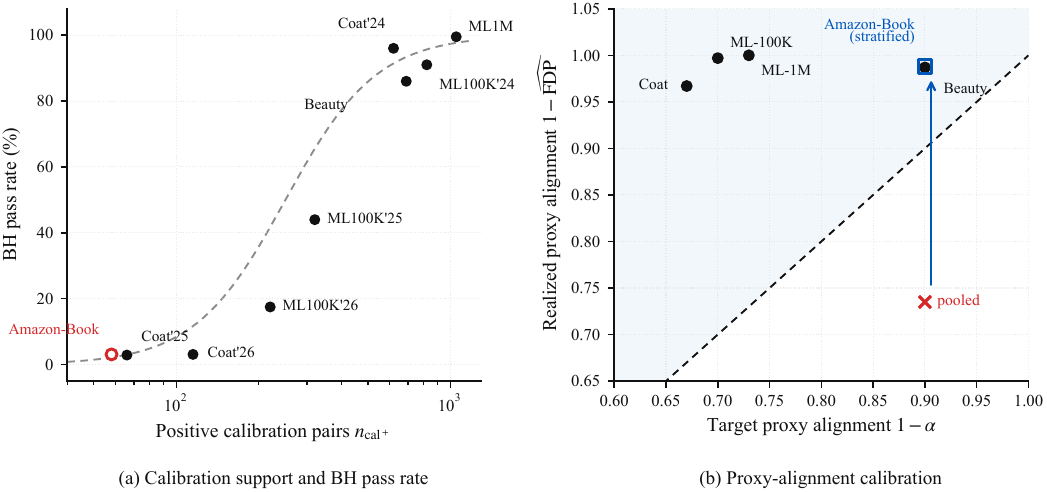}
\caption{Finite-sample calibration diagnostics. (a) BH pass rate as a function of positive calibration support. (b) Empirical proxy alignment relative to the nominal target across runs. The Amazon-Book results compare global and popularity-stratified calibration at $\alpha=0.10$.}
\label{fig:conformal_analysis}
\end{figure}

\noindent\textbf{Popularity stratification aligns heterogeneous score regions.}
Fig.~\ref{fig:conformal_analysis}(b) examines popularity-dependent score heterogeneity without changing the underlying recommendations: global and popularity-stratified calibration are applied to the same Amazon-Book candidate pool at $\alpha=0.10$, with the ranker and candidate pool held fixed. Candidates are grouped by training popularity, calibrated within each group, and pooled after selection. At $\alpha=0.10$, this diagnostic raises empirical proxy alignment from $0.735$ to $0.988$, exceeding the nominal target of $0.90$ with the ranker fixed.

\subsection{OOD Candidate Recovery}
\label{sec:exp_main}

This experiment evaluates whether preference-grounded counterfactual supervision improves OOD candidate recovery under diverse shifts. We compare GenCAR with methods for propensity correction, invariant representation learning, and counterfactual generation. TallRec provides an inference-time LLM reference. Recall@10 is the primary metric because it measures useful candidates near the top of each ranked list. The matched CausalVAE comparison holds the backbone, data splits, and evaluation protocol fixed, leaving the offline counterfactual pipeline as the experimental difference. Table~\ref{tab:main_results} reports the resulting means over three seeds.

\begin{table*}
\centering
\caption{Mean OOD ranking performance over three seeds under temporal,
exposure, and popularity shifts. \textbf{Bold} and \underline{underlined} entries denote the best and second-best results among methods without online LLM
calls. TallRec is included as an inference-time LLM reference. R@$K$ and N@$K$ denote Recall@$K$ and NDCG@$K$, respectively; BB denotes the CausalVAE backbone.}
\label{tab:main_results}
\setlength{\tabcolsep}{3pt}
\small
\resizebox{\textwidth}{!}{%
\begin{tabular}{l cccc cccc cccc}
\toprule
& \multicolumn{4}{c}{\textbf{ML-100K}}
& \multicolumn{4}{c}{\textbf{Coat}}
& \multicolumn{4}{c}{\textbf{Amazon-Book}} \\
\cmidrule(lr){2-5} \cmidrule(lr){6-9} \cmidrule(lr){10-13}
\textbf{Method}
& R@10 & R@20 & N@10 & N@20
& R@10 & R@20 & N@10 & N@20
& R@10 & R@20 & N@10 & N@20 \\
\midrule
BPRMF~\cite{rendle2009bpr}
& 0.0409 & 0.0794 & 0.0939 & 0.0990
& 0.0726 & 0.1238 & 0.0415 & 0.0586
& 0.0002 & 0.0003 & 0.0005 & 0.0005 \\
\midrule
DICE~\cite{zheng2021disentangling}
& 0.0445 & 0.0707 & 0.0929 & 0.0942
& \underline{0.0730} & \textbf{0.1285} & 0.0403 & 0.0587
& 0.0001 & 0.0003 & 0.0005 & 0.0004 \\
IPS-CN~\cite{schnabel2016recommendations}
& 0.0329 & 0.0636 & 0.0699 & 0.0751
& 0.0674 & 0.1175 & 0.0409 & 0.0582
& 0.0001 & 0.0002 & 0.0003 & 0.0003 \\
CausE~\cite{bonner2018cause} 
& 0.0337 & 0.0623 & 0.0841 & 0.0854
& 0.0535 & 0.0956 & 0.0357 & 0.0502
& 0.0001 & 0.0002 & 0.0003 & 0.0003 \\
MACR~\cite{wei2021model}
& 0.0013 & 0.0027 & 0.0032 & 0.0036
& 0.0157 & 0.0317 & 0.0119 & 0.0177
& 0.0001 & 0.0002 & 0.0003 & 0.0003 \\
InvCF~\cite{zhang2023invcf} 
& 0.0322 & 0.0638 & 0.0748 & 0.0774
& 0.0641 & 0.1183 & 0.0374 & 0.0554
& 0.0001 & 0.0002 & 0.0003 & 0.0003 \\
CDR~\cite{wang2023cor}
& 0.0300 & 0.0591 & 0.0676 & 0.0713
& 0.0518 & 0.0825 & 0.0315 & 0.0416
& 0.0001 & 0.0002 & 0.0003 & 0.0003 \\
CausalDiffRec~\cite{zhao2025causaldiffrec}
& 0.0116 & 0.0225 & 0.0422 & 0.0439
& 0.0402 & 0.0596 & 0.0274 & 0.0345
& 0.0002 & 0.0004 & 0.0006 & 0.0006 \\
\midrule
\multicolumn{13}{l}{\emph{LLM-based recommenders} (inference-time LLM call)} \\
TallRec~\cite{bao2023tallrec} (Qwen2.5-3B + LoRA)
& 0.0675 & 0.1231 & 0.1746 & 0.1669
& 0.0637 & 0.1194 & 0.0446 & 0.0606
& 0.0070 & 0.0124 & 0.0188 & 0.0177 \\
\midrule
CausalVAE (BB)
& \underline{0.0856} & \underline{0.1409} & \underline{0.1808} & \textbf{0.1842}
& 0.0639 & \underline{0.1197} & \underline{0.0417} & \underline{0.0603}
& \underline{0.0046} & \underline{0.0086} & \underline{0.0116} & \underline{0.0116} \\
\rowcolor{gray!12}
\textbf{GenCAR}
& \textbf{0.0950} & \textbf{0.1438} & \textbf{0.1820} & \underline{0.1798}
& \textbf{0.0761} & \underline{0.1240} & \textbf{0.0438} & \textbf{0.0604}
& \textbf{0.0066} & \textbf{0.0120}
& \textbf{0.0165} & \textbf{0.0162} \\
$\Delta$ vs BB
& 11.0\%\,$\uparrow$ & 2.1\%\,$\uparrow$ & 0.7\%\,$\uparrow$  & 2.4\%\,$\downarrow$
& 19.1\%\,$\uparrow$ & 3.6\%\,$\uparrow$ & 5.0\%\,$\uparrow$  & 0.2\%\,$\uparrow$
& 43.5\%\,$\uparrow$ & 39.5\%\,$\uparrow$
& 42.2\%\,$\uparrow$ & 39.7\%\,$\uparrow$ \\
\bottomrule
\end{tabular}}
\end{table*}

\noindent\textbf{GenCAR achieves the strongest LLM-free top-10 recovery
across all three shifts.} Table~\ref{tab:main_results} reports means over three seeds. GenCAR ranks first in 10 of the 12 LLM-free dataset--metric comparisons and second in the
remaining two. Its Recall@10 improves over the corresponding CausalVAE
references from $0.0856$ to $0.0950$, from $0.0639$ to $0.0761$, and
from $0.0046$ to $0.0066$, corresponding to gains of $11.0\%$,
$19.1\%$, and $43.5\%$. The gain is positive under all three shifts and
in all nine dataset--seed comparisons
(Fig.~\ref{fig:main_comparison} and Fig.~\ref{fig:sensitivity}(c)).

\noindent\textbf{GenCAR preserves an LLM-free online path.}
The TallRec reference separates offline counterfactual supervision from LLM access during serving: GenCAR uses embedding-based inference, whereas TallRec invokes an LLM online on the same datasets and ranking metrics. GenCAR exceeds the inference-time TallRec reference on ML-100K
(\(0.0950\) versus \(0.0675\)) and Coat (\(0.0761\) versus \(0.0637\)).
On Amazon-Book, GenCAR approaches TallRec
(\(0.0066\) versus \(0.0070\)) while retaining embedding-based online
scoring. The cross-shift recovery gains therefore do not require an
online LLM call.

\begin{figure*}
		\centering
		\includegraphics[width=\textwidth]{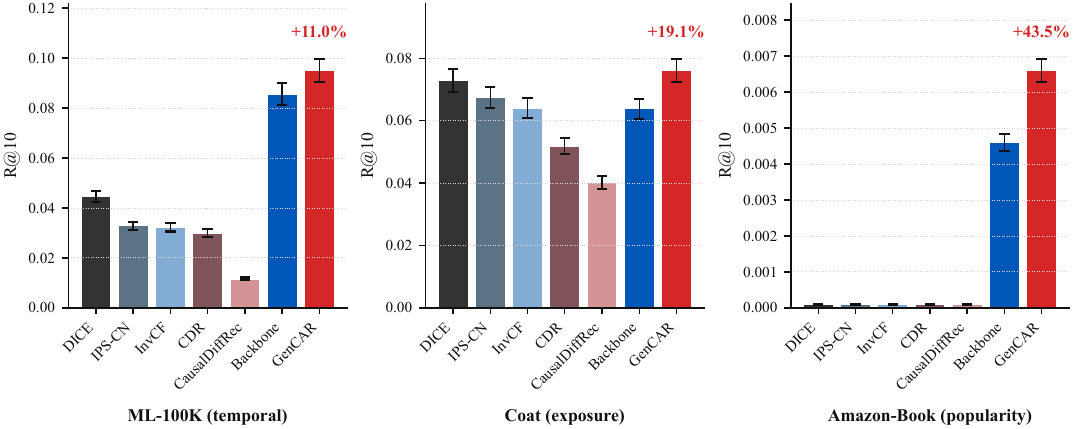}
		\caption{OOD performance under temporal, exposure, and popularity shifts. Bars report means over three seeds. Percentages show GenCAR's relative gains over the corresponding CausalVAE backbone}
		\label{fig:main_comparison}
\end{figure*}

\subsection{Mechanism and Sensitivity Analysis}
\label{sec:exp_ablation}
The preceding results establish GenCAR's cross-shift Recall@10 gain.
We next examine whether this gain arises from counterfactual content,
candidate novelty, trust-radius filtering, or preference factorization.
We vary one factor at a time under matched backbones and evaluation protocols. The resulting comparisons address four distinct explanations for the main gain: augmentation volume, candidate novelty, preference compatibility, and run-to-run variation.

\noindent\textbf{Preference-grounded counterfactuals outperform
matched augmentation controls.}
On Coat, random and inverse-popularity augmentation add the same number of training pairs as the LLM counterfactuals, controlling for augmentation volume. Under the shared ablation configuration, LLM counterfactuals achieve
a Recall@10 of \(0.0720\) on Coat, compared with \(0.0641\) for random
augmentation and \(0.0653\) for inverse-popularity sampling. The
\(12.3\%\) gain over the matched random control isolates the value of
counterfactual content from the number of additional training pairs.

\noindent\textbf{Offline LLM choice separates novelty from ranking gain.} With augmentation volume controlled, candidate novelty remains a separate explanation. Fig.~\ref{fig:llm_choice} compares the novelty profiles of DeepSeek-Chat, V4-Flash, and V4-Pro under the same Coat backbone with their Recall@10 gains over the backbone. DeepSeek-Chat generates $88.6\%$ novel candidates and improves Recall@10 by $8.0\%$. V4-Flash and V4-Pro each reach $100\%$ novelty, with gains of $18.7\%$ and $11.5\%$. Equal novelty with different recovery gains shows that support expansion and candidate utility capture distinct properties.
\begin{figure}
\centering
\includegraphics[width=0.95\columnwidth]{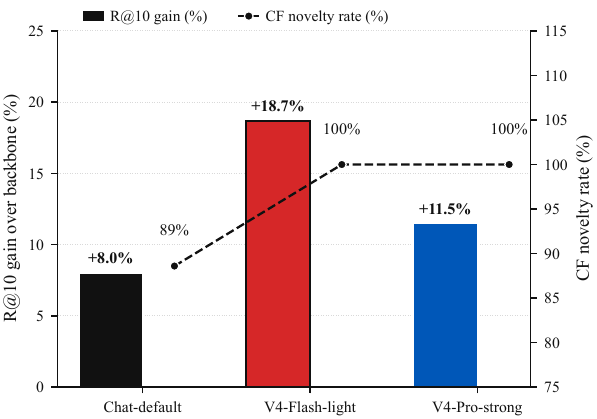}
\caption{Effect of the offline LLM on Coat. Bars report the Recall@10 improvement over the CausalVAE backbone. Circle markers report the fraction of generated candidates absent from the observed training support.}
\label{fig:llm_choice}
\end{figure}

\noindent\textbf{Trust-radius filtering has a stable operating region
on the primary ranking benchmarks.}
Because $\delta$ determines which generated candidates are treated as preference-compatible, we sweep $\delta\in\{0.3,0.5,0.7,0.9,1.0\}$ on ML-100K and Coat with the remaining configuration fixed. Fig.~\ref{fig:sensitivity}(a) shows that ML-100K reaches its
largest gain at \(\delta=0.5\), with gradual variation between
\(\delta=0.5\) and \(0.7\). Coat exhibits a similarly stable region
around the default value \(\delta=0.7\). These curves identify a broad
operating range for filtering preference-compatible proposals.

\begin{figure*}
\centering
\includegraphics[width=\textwidth]{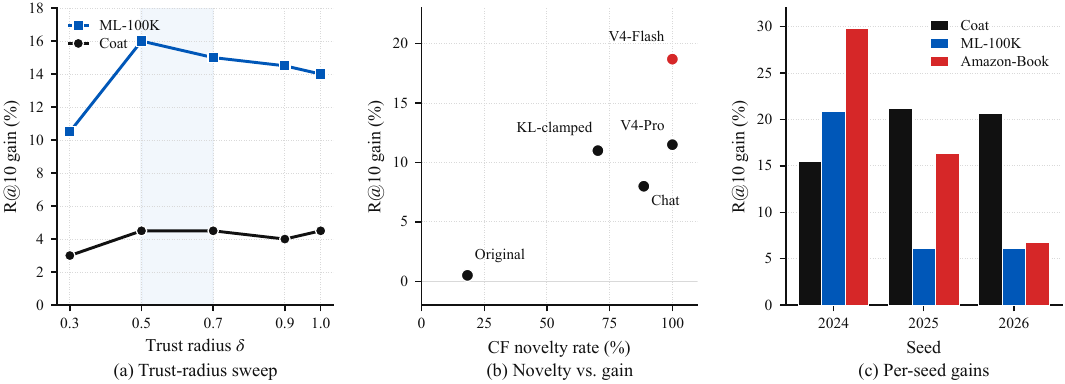}
\caption{Mechanism and sensitivity analysis.
(a) Recall@10 gain as the trust radius varies.
(b) Association between counterfactual novelty and ranking gain.
(c) Per-seed gains under the main GenCAR configuration.}
\label{fig:sensitivity}
\end{figure*}

\noindent\textbf{KL-clamped preference factorization improves novelty and recovery together.} The factorization analysis pairs counterfactual novelty with Recall@10 gain, comparing the original backbone with the KL-clamped (\emph{fixkl}) configuration. Fig.~\ref{fig:sensitivity}(b) shows that novelty rises from $18.3\%$ to $70.3\%$, while the Recall@10 gain rises from near zero to approximately $8.0\%$. This paired change links stable preference factorization to more useful counterfactual supervision. Fig.~\ref{fig:sensitivity}(c) then extends the aggregated comparison to the three individual seeds under each main shift. The figure reports positive Recall@10 gains in all nine dataset--seed comparisons, covering three shifts and three seeds. This consistency supports the repeatability of the main candidate-recovery result.

\subsection{Qualitative Analysis}
\label{sec:exp_case_study}
The aggregate evaluations quantify pooled proxy-label FDP and OOD candidate recovery. We next examine how GenCAR changes candidate composition for individual users. The analysis traces the environmental intervention, candidate-level proxy evidence, and latent candidate placement.

\noindent\textbf{Environmental intervention changes candidate composition under fixed preference.} Fig.~\ref{fig:case_user156} follows Coat User~156 from logged missing-not-at-random (MNAR) exposure to the counterfactual missing completely at random (MCAR) intervention: $\zc$ is fixed, $\ze$ is replaced by the MCAR target, and counterfactual items found among the held-out positives are marked. Bomber jackets account for $50\%$ of logged interactions and $33\%$ of the counterfactual distribution. The intervention assigns $22\%$, $22\%$, and $23\%$ to Packable, Rain, and Other jackets; Packable and Other also occur among the held-out MCAR positives. This case traces the change induced by $\operatorname{do}(\ze:=\zet^{\mathrm{MCAR}})$ while $\zc$ remains fixed.
\begin{figure*}
\centering
\includegraphics[width=\textwidth]
{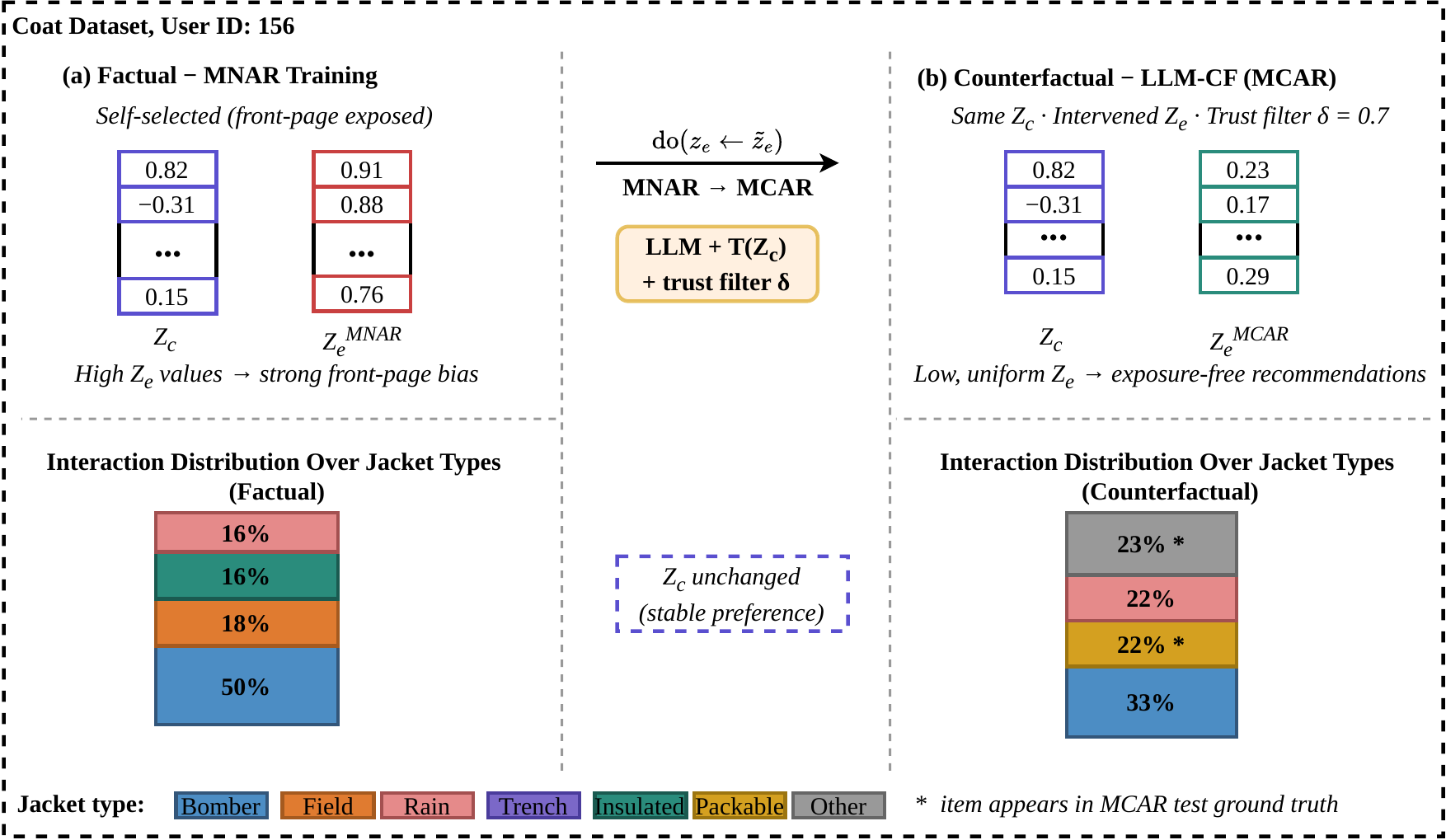}
\caption{Environmental intervention for Coat User~156.
(a) Logged interactions under MNAR exposure.
(b) Counterfactual candidates under
$\mathrm{do}(\ze:=\zet^{\mathrm{MCAR}})$.
The preference representation $\zc$ remains fixed, and filled circles mark
items that occur in the held-out MCAR positives. MNAR and MCAR denote missing not at random and missing completely at random, respectively.}
\label{fig:case_user156}
\end{figure*}

\noindent\textbf{Proxy-alignment scores organize candidate-level evidence before conformal calibration.} Table~\ref{tab:case_study} complements the pooled risk audit with candidate-level evidence from five Amazon-Book users under popularity shift. Alignment scores are shown above, near, and below $\tau=0.85$, with held-out positives marked for reference. For User 818, \emph{The Sense of an Ending} receives a proxy-alignment score of \(0.90\). This item is a held-out positive that is absent from the backbone shortlist. Users 166 and 449 include candidates above, near, and below the proxy threshold. These cases visualize the graded candidate-level evidence available before conformal calibration. The additional users show the same score ordering across different preference histories. For User~974, goal- and mastery-oriented titles receive scores of \(0.85\)--\(0.95\), while the more broadly related \emph{Wired to Care} receives \(0.50\). For User~1249, \emph{Dark Mirror} and \emph{Starship Troopers} receive \(0.90\) and \(0.85\), compared with \(0.30\) for the cross-genre \emph{My Side of the Mountain}. Across these cases, the scores provide graded candidate evidence for the specified counterfactual preference before conformal calibration.
\begin{table*}[h]
\centering
\caption{Amazon-Book counterfactual candidates under popularity shift. \colorbox{Cyan!20}{Blue} denotes candidates with proxy-alignment scores at or above \(\tau=0.85\), \colorbox{gray!15}{gray} denotes candidates near the threshold, and  \colorbox{yellow!20}{yellow} denotes candidates with lower scores. A $\bigstar$\ marks held-out positives.}
\label{tab:case_study}
\footnotesize
\renewcommand{\arraystretch}{1.1}
\setlength{\tabcolsep}{3pt}
\begin{tabular}{c p{3.5cm} p{8.6cm} p{3.5cm}}
\toprule
\textbf{User} & \textbf{Training history}
  & \textbf{LLM counterfactual items (alignment score)}
  & \textbf{Held-out positives} \\
\midrule
U\,818
& The Dying Animal; Kafka on the Shore; Charlotte Simmons ($71$ total)
& \colorbox{Cyan!20}{A Visit from the Goon Squad (0.90)};
  \colorbox{Cyan!20}{\textbf{The Sense of an Ending (0.90)}~$\bigstar$};
  \colorbox{Cyan!20}{Year of Wonders (0.85)};
  \colorbox{gray!15}{The Road (0.80)};
  \colorbox{yellow!20}{Stoner (0.75)}
& \textbf{The Sense of an Ending}~$\bigstar$; How to Live;
  Blue Nights ($16$ total) \\
\midrule
U\,166
& L.A.\ Requiem; Deck the Halls; Gone for Good; Hard Eight ($56$ total)
& \colorbox{Cyan!20}{Tuesdays with Morrie (0.95)};
  \colorbox{Cyan!20}{The Alchemist (0.90)};
  \colorbox{Cyan!20}{I Know Why the Caged Bird Sings (0.85)};
  \colorbox{gray!15}{Harry Potter: Prisoner of Azkaban (0.80)};
  \colorbox{yellow!20}{War and Peace (0.60)}
& When She Was Bad; Brimstone; The Dark Tide ($14$ total) \\
\midrule
U\,449
& Catechism of the Catholic Church; Dreamcatcher; A Simple Plan
  ($56$ total)
& \colorbox{Cyan!20}{A Fine Balance (0.90)};
  \colorbox{Cyan!20}{John Adams (0.85)};
  \colorbox{Cyan!20}{A Simple Plan (0.85)};
  \colorbox{gray!15}{Man's Search For Meaning (0.80)};
  \colorbox{yellow!20}{Atlas Shrugged (0.75)}
& Killing Floor (Jack Reacher); Cell: A Novel;
  No Country for Old Men ($14$ total) \\
\midrule
U\,974
& Atlas Shrugged; Capitalism: The Unknown Ideal; Extreme Programming Explained; Good to Great ($71$ total)
& \colorbox{Cyan!20}{Put Your Dream to the Test (0.95)}; \colorbox{Cyan!20}{Peaks and Valleys (0.90)}; \colorbox{Cyan!20}{Mastery (0.85)}; \colorbox{gray!15}{What Got You Here Won't Get You There (0.82)}; \colorbox{yellow!20}{Wired to Care (0.50)}
& The Case Against the Fed; Buy-In; Your Brain at Work ($18$ total) \\
\midrule
U\,1249
& Starman Jones; Tunnel in the Sky; Imzadi; Dyson Sphere ($56$ total)
& \colorbox{Cyan!20}{Star Trek The Next Generation: Dark Mirror (0.90)}; \colorbox{Cyan!20}{Starship Troopers (0.85)}; \colorbox{gray!15}{Eaters of the Dead (0.80)}; \colorbox{yellow!20}{Ringworld (0.75)}; \colorbox{yellow!20}{My Side of the Mountain (0.30)}
& The Fabulous Riverboat; Nightfall and Other Stories; Job: A Comedy of Justice ($15$ total) \\
\bottomrule
\end{tabular}
\end{table*}

We complement the item-level cases with the latent-space view in Fig.~\ref{fig:disentangle_and_cf}. For panel (a), the two 64-dimensional representations of all Coat users are stacked into a shared $(2n_{\mathrm{users}})\times64$ matrix and projected jointly using t-distributed stochastic neighbor embedding (t-SNE)~\cite{JMLR:v9:vandermaaten08a}, with perplexity 20, 1{,}000 iterations, principal component analysis (PCA) initialization, and random seed 0. Panel (b) overlays the training interactions, backbone top-10, six counterfactual candidates, and held-out positives for User~240 on the 300-item map.

 Fig.~\ref{fig:disentangle_and_cf}(a) places the preference and environment representations in separated regions of the two-dimensional projection, which is consistent with the factorization used for counterfactual generation. Fig.~\ref{fig:disentangle_and_cf}(b) traces Coat User~240. One of six counterfactual candidates matches a held-out positive, and several lie outside the backbone's top-10 neighborhood. The generated set therefore expands this user's shortlist with a relevant held-out item.
 
\begin{figure*}
		\centering
		\includegraphics[width=\textwidth]{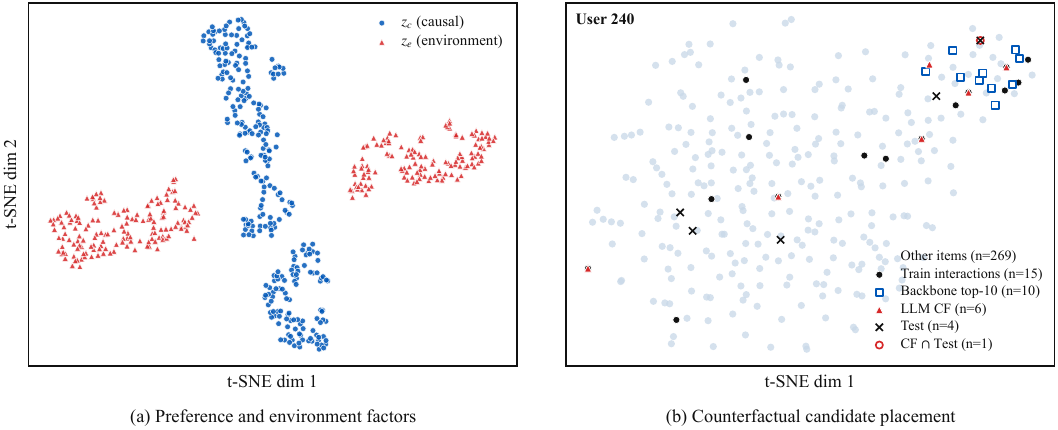}
		\caption{Latent geometry and counterfactual candidate placement on Coat. (a) Two-dimensional t-distributed stochastic neighbor embedding (t-SNE) projection of the preference and environment representations. (b) Candidate neighborhood for User 240, including generated candidates, backbone top-10 items, and held-out positives. CF denotes counterfactual.}
		\label{fig:disentangle_and_cf}
\end{figure*}

\subsection{Serving Efficiency and Partial Offline Coverage}
\label{sec:cost}
Deployment depends on both offline coverage and online cost, which we report separately in Tables~\ref{tab:sequential_results} and~\ref{tab:cost}. The partial-coverage study evaluates all 2,664 Steam users, although offline LLM counterfactuals are available for only 1,020 of them. The comparison includes the CausalVAE+BGE backbone, anchor counterfactual fine-tuning, and an auxiliary reranking variant that explicitly retains available counterfactual items.

\noindent\textbf{GenCAR retains ranking utility under partial offline
coverage.}
Table~\ref{tab:sequential_results} evaluates 2,664 Steam users, of whom
1,020 receive offline LLM counterfactuals. With $38\%$ offline coverage, anchor counterfactual fine-tuning remains close to the CausalVAE+BGE backbone ($0.0409$ versus $0.0411$ in Recall@10). The auxiliary GenCAR + rerank variant explicitly retains available counterfactual items in the ranked list, raising Recall@10 to $0.0449$ and NDCG@10 to $0.0725$, gains of $9.2\%$ and $6.9\%$. This variant measures partial-coverage utility; the standard pooled selector remains the certified output.

\begin{table}[t]
\centering
\caption{Ranking performance on Steam with offline LLM counterfactuals available for \(38\%\) of the evaluated users. GenCAR + rerank is the auxiliary explicit-retention variant. R@$K$ and N@$K$ denote Recall@$K$ and NDCG@$K$, respectively; CF denotes counterfactual.}
\label{tab:sequential_results}
\footnotesize
\setlength{\tabcolsep}{3pt}
\begin{tabular}{lcccc}
\toprule
\textbf{Method} & \textbf{R@10} & \textbf{R@20} & \textbf{N@10} & \textbf{N@20} \\
\midrule
CausalVAE + BGE backbone & 0.0411 & 0.0646 & 0.0678 & 0.0690 \\
GenCAR anchor CF fine-tune & 0.0409 & 0.0628 & 0.0677 & 0.0686 \\
\textbf{GenCAR + rerank} & \textbf{0.0449} & \textbf{0.0661}
& \textbf{0.0725} & \textbf{0.0719} \\
\bottomrule
\end{tabular}
\end{table}

\noindent\textbf{Offline counterfactual generation preserves sub-millisecond online serving.} Table~\ref{tab:cost} reports trainable parameters, training time per seed, per-user inference latency, and online LLM usage on Coat. GenCAR's one-time generation stage is treated as an offline cost, while TallRec's inference-time LLM call is included in online latency. GenCAR and its CausalVAE backbone each contain \(0.15\) million trainable parameters and require less than one millisecond per user. TallRec requires approximately \(5.4\) seconds per user in the reported setup. GenCAR's online path consists of user encoding, inner-product scoring, and calibrated selection. Thus, we add counterfactual supervision while retaining the backbone’s model size and sub-millisecond serving profile.
\begin{table}[t]
\centering
\caption{Model size and online serving cost on Coat. One-time offline LLM counterfactual generation of GenCAR is excluded from inference latency. BB denotes the CausalVAE backbone; M denotes millions of parameters; CF denotes counterfactual.}
\label{tab:cost}
\small
\setlength{\tabcolsep}{1pt}
\renewcommand{\arraystretch}{1}
\begin{tabular}{lcccc}
\toprule
\multirow{2}{*}{\textbf{Method}} & \textbf{Params}& \textbf{Training}& \textbf{Inference} &\textbf{LLM}\\
& (M)& (min/seed)& (ms/user) & \textbf{Usage} \\
\midrule
BPRMF          & 0.04       & 2         & $<\!1$           & $\times$ \\
DICE/InvCF/CDR & 0.08--0.12 & 8--12     & $<\!1$           & $\times$ \\
CausalDiffRec  & 0.85       & 53        & ${\sim}35^{*}$   & $\times$ \\
TallRec        & ${\sim}3000$ & ${\sim}30$ & ${\sim}5400^\dagger$ & infer \\
\midrule
CausalVAE (BB) & 0.15       & 30        & $<\!1$           & $\times$ \\
\rowcolor{gray!12}
\textbf{GenCAR} & \textbf{0.15} & \textbf{30 + offline$^{\ddagger}$} & \textbf{$<1$} & Offline\\
\bottomrule
\end{tabular}
\vspace{1mm}
\scriptsize
\noindent$^{*}$50 denoising steps. 
$^{\dagger}$LLM inference cost.
$^{\ddagger}$One-time offline CF.
\end{table}

Together, these results define the deployment profile of GenCAR. Offline LLM counterfactuals can cover a subset of users, and online serving remains LLM-free and sub-millisecond.

\section{Related Work}
\label{sec:related}
Our work connects two lines of research: counterfactual candidate construction for OOD recommendation and predictive inference for risk-controlled selection. Table~\ref{tab:positioning} compares the relevant method families in terms of OOD scope, candidate construction, certified target, and online LLM usage.

\noindent\textbf{OOD and counterfactual recommendation.}
Out-of-distribution recommendation seeks to preserve ranking utility when interaction distributions change between training and serving. Reviews organize causal recommendation through potential-outcome, structural causal model, and counterfactual formulations~\cite{luo2024causalrec,wang2023causaltutorial}. Causal disentanglement and invariant learning separate preference-related factors from environmental correlations~\cite{zheng2021disentangling,zhang2023invcf,wang2023cor} \cite{yoo2025tpab,liao2025idea}. Distributionally robust objectives and causal diffusion extend this direction to broader forms of distribution shift~\cite{wang2024distributionally,zhao2025causaldiffrec}. Contrastive learning with uniform data further reduces exposure bias in learned representations~\cite{yang2024cdlrec}. These methods primarily reweight observed interactions or learn shift-robust representations. Item-level supervision for a specified environmental intervention remains outside their main objective.

Generative recommenders model complex interaction distributions through diffusion and preference-aware objectives~\cite{wang2023diffrec,yang2023dreamrec,li2023diffurec,liu2025preferdiff} \cite{rajput2023tiger,ma2024pdrec,zhai2024hstu,cai2025diffdiv}. A recent review places these methods within the broader family of generative recommender systems~\cite{deldjoo2024genrecsys}. Surveys of LLM-enhanced recommendation organize language models by their roles in feature construction, representation learning, ranking, and generation~\cite{lin2025llmsurvey,zhao2024llmera}. Individual systems integrate collaborative semantics~\cite{zhang2023collm,zheng2024lcrec,liu2024cccf}, external knowledge~\cite{xi2024kar,zhao2024ikgrec}, or LLM-based ranking~\cite{liao2024llara,hou2024llmrank}, further illustrating the role of language knowledge in recommendation~\cite{sheng2025alpharec} \cite{wei2024llmrec,kim2024allmrec,lin2024rella}. Their primary objectives are representation enrichment, ranking, or direct generation. Intervention-specific supervision constrained by stable preference remains a separate requirement. GenCAR uses an offline LLM to construct item-level counterfactual pairs under an explicit environmental intervention. Stable-preference anchors and trust-radius filtering ground these proposals before ranker training.

\noindent\textbf{Risk-controlled predictive inference.}
Predictive inference quantifies uncertainty and provides finite-sample guarantees for learned outputs~\cite{prinster2024conformal}. Conformal prediction constructs distribution-free prediction sets~\cite{vovk2005,angelopoulos2021gentle}. Conformal risk control extends calibration to the expected values of monotone losses~\cite{angelopoulos2024crc}, while conformal \(p\)-values support selection with FDR control~\cite{bates2023testing,jin2023selection}.

Recommendation-specific methods apply these guarantees to set selection. Angelopoulos et al.~\cite{angelopoulos2023reliable} transform pretrained ranker outputs into recommendation sets with finite-sample FDR control. Conformal Alignment~\cite{gui2024conformal} selects foundation-model outputs using a user-defined alignment criterion. De Toni et al.~\cite{detoni2025flowers} bound unwanted recommendations and expand selected sets using observed user feedback. Their guarantees apply to candidate families paired with observed relevance, alignment, or unwanted-content labels. GenCAR introduces intervention-specific supervision before ranking and then calibrates the ranked candidate family under proxy labels. The \(\alpha\)-VCR formulation makes pooled proxy-label FDR the explicit serving constraint, connecting preference-grounded candidate recovery with calibrated set selection.
\begin{table*}
\centering
\caption{\textbf{Positioning of GenCAR across OOD candidate
construction and risk-controlled serving.}
A dash indicates no finite-sample guarantee for the served set.
``Partial'' means that only some methods in the family explicitly
address distribution shift.}
\label{tab:positioning}
\footnotesize
\setlength{\tabcolsep}{4pt}
\renewcommand{\arraystretch}{1.18}

\begin{tabular}{
@{}
>{\raggedright\arraybackslash}p{0.29\textwidth}
>{\centering\arraybackslash}p{0.06\textwidth}
>{\raggedright\arraybackslash}p{0.25\textwidth}
>{\raggedright\arraybackslash}p{0.21\textwidth}
>{\centering\arraybackslash}p{0.08\textwidth}
@{}
}
\toprule
\textbf{Family}
& \textbf{OOD}
& \textbf{Candidate construction}
& \textbf{Certified target}
& \shortstack{\textbf{Online LLM}} \\
\midrule

Invariant and robust OOD recommendation
~\cite{zhang2023invcf,wang2023cor,
wang2024distributionally,zhao2025causaldiffrec} &Yes
&Observed interactions and invariant representations &\textemdash
&No\\
\addlinespace[2pt]
Diffusion recommendation~\cite{wang2023diffrec,yang2023dreamrec,
li2023diffurec,liu2025preferdiff}&Partial&Diffusion-generated interactions or sequences&\textemdash&No\\
\addlinespace[2pt]
LLM-augmented recommendation~\cite{liao2024llara,xi2024kar,sheng2025alpharec}
&No &LLM knowledge or language representations &\textemdash & Varies\\

\addlinespace[2pt]
Reliable recommendation sets
~\cite{angelopoulos2023reliable}
&
No
&
Fixed ranked candidates
&
Relevance FDR
&
No
\\

\addlinespace[2pt]
Conformal Alignment
~\cite{gui2024conformal}
&
No
&
Foundation-model outputs
&
Output-alignment FDR
&
Model-dependent
\\

\addlinespace[2pt]
Conformal risk control for unwanted recommendations
~\cite{detoni2025flowers}
&
No
&
Ranked and previously consumed items
&
Expected unwanted fraction
&
No
\\

\midrule
\textbf{GenCAR (ours)}
&
\textbf{Yes}
&
\textbf{Ranked candidates learned from environment-intervened,
preference-grounded supervision}
&
\textbf{Pooled proxy-label FDR}
&
\textbf{No}
\\

\bottomrule
\end{tabular}
\end{table*}

\section{Conclusion}
\label{sec:conclusion}
In this paper, we presented GenCAR, a two-stage framework for preference-grounded counterfactual construction and risk-controlled selection in OOD recommendation. By formulating serving as the $\alpha$-Valid Counterfactual Recommendation ($\alpha$-VCR) problem, GenCAR connects environment-intervened candidate construction with FDR-calibrated selection. Theoretically, we bounded conditional counterfactual approximation error, established finite-sample proxy-label FDR control under target-environment exchangeability and PRDS, and characterized the remaining discrepancy between proxy-label and causal risk. Empirically, realized proxy FDP remained below $\alpha$ across five evaluated pools, and GenCAR improved top-10 OOD candidate recovery over the corresponding CausalVAE backbones under diverse shifts. Moving LLM-based counterfactual generation and proxy labeling offline preserves a sub-millisecond, LLM-free online path. GenCAR places counterfactual OOD serving on an explicit statistical footing, with guarantees scoped to proxy labels and their transfer assumptions.

\section*{CRediT authorship contribution statement}
\textbf{Qianqian Wang:} Conceptualization, Methodology, Software, Validation, Formal analysis, Investigation, Writing -- original draft. 
\textbf{Yunshan Li:} Writing – review \& editing.
\textbf{Jiawen Zeng:} Software, Validation.
\textbf{Wenwu Gong:} Writing – review \& editing,
Methodology, Conceptualization.
\textbf{Lili Yang:} Supervision,
Resources, Writing – review \& editing.

\section*{Declaration of competing interest}
The authors declare that they have no known competing financial interests or personal relationships that could have appeared to influence the work reported in this paper.

\section*{Acknowledgements}
This research was funded by the SUSTech Presidential Postdoctoral Fellowship, the China Postdoctoral Science Foundation (Grant No. 2025M773057), Shenzhen Science and Technology Program (Grant No. ZDSYS20210623092007023), and the Shenzhen Key Laboratory of Safety and Security for Next Generation of Industrial Internet, Southern University of Science and Technology.

\section*{Data availability}
The MovieLens, Coat, Amazon-Book, Amazon-Beauty, and Steam datasets used in this study are publicly available from the sources cited in Section~\ref{sec:exp_setup}. Dataset sources, access instructions, and licensing restrictions are documented at \url{https://github.com/nikewq/GenCAR-release}. The model code, training and evaluation pipelines, trained checkpoints, processed evaluation splits, and complete result artifacts will be released in the same repository after manuscript acceptance.

\section*{Declaration of generative AI use}
The authors used generative AI tools only for language polishing and readability improvement.  All technical ideas, experiments, analyses, and conclusions were developed and verified by the authors, who take full responsibility for the content of this paper.

\appendix
\numberwithin{figure}{section}
\numberwithin{table}{section}
\section{Proofs of Theoretical Results}
\label{app:proofs}

\subsection{Proof of Proposition~\ref{thm:cf_validity}}
\label{app:proof_cf_validity}

\begin{proof}
The triangle inequality for total variation gives
\begin{equation*}
\mathrm{TV}
(P^\star,P^M)
\leq
\epsilon_A+\epsilon_M.
\end{equation*}
Let $P=P^\star$, $Q=P^M$, and $A=\mathcal T_u(\delta)$. For any $B\subseteq\mathcal I$,
\begin{align*}
&
\left|
P(B\mid A)-Q(B\mid A)
\right|
\nonumber\\
&\leq
\frac{
|P(B\cap A)-Q(B\cap A)|
}{
P(A)
}
\nonumber\\
&\quad+
Q(B\cap A)
\left|
\frac{1}{P(A)}-\frac{1}{Q(A)}
\right|
\nonumber\\
&\leq
\frac{
2\,\mathrm{TV}(P,Q)
}{
\rho_\delta
}.
\end{align*}
Taking the supremum over $B$ and combining the two bounds proves
the first claim of Proposition~\ref{thm:cf_validity}.

For the event $\mathcal M_u$,
\begin{align*}
P_\delta^M(\mathcal M_u)
&\leq
P_\delta^\star(\mathcal M_u)
+
\left|
P_\delta^M(\mathcal M_u)
-
P_\delta^\star(\mathcal M_u)
\right|
\nonumber\\
&\leq
P_\delta^\star(\mathcal M_u)
+
\frac{2(\epsilon_A+\epsilon_M)}
{\rho_\delta}.
\end{align*}
This proves the second claim of
Proposition~\ref{thm:cf_validity}.
\end{proof}

\subsection{Proof of Lemma~\ref{lem:bh_nested}}
\label{app:proof_bh_nested}

\begin{proof}
By construction, $\mathcal S_{k+1}$ contains all pairs in
$\mathcal S_k$ and one additional pair with the next smallest
$p$-value. Hence
\[
\mathcal S_k\subseteq\mathcal S_{k+1}
\]
for every $k\in\{0,\ldots,N-1\}$, which proves
the nestedness claim of Lemma~\ref{lem:bh_nested}.

The index $\widehat k_\alpha$ is defined as the largest index
satisfying the BH inequality. Thus,
$\mathcal S_{\widehat k_\alpha}$ is the largest member of the nested
family satisfying the step-up condition.

Let $\alpha_1\leq\alpha_2$. Every index satisfying
$p_{(k)}\leq \frac{\alpha_1 k}{N}$ also satisfies
$p_{(k)}\leq\frac{\alpha_2 k}{N}$.
It follows that
$\widehat k_{\alpha_1}\leq\widehat k_{\alpha_2}$.
Nestedness of $\{\mathcal S_k\}_{k=0}^{N}$ gives
$\mathcal S_{\alpha_1}\subseteq\mathcal S_{\alpha_2}$.
\end{proof}

\subsection{Proof of Lemma~\ref{lem:pvalue_validity}}
\label{app:proof_pvalue_validity}

\begin{proof}
Fix a proxy-null candidate $(u,j)$ and condition on the learned
artifacts and candidate family. Let $R^{+}$ be the upper rank of
$V(u,j)$ among the $n_0$ calibration scores and the test score. The
weak-rank $p$-value in Eq.~\eqref{eq:method_pvalue} satisfies
\[
p(u,j)=\frac{R^{+}}{n_0+1}.
\]
Conditional on the multiset of the $n_0+1$ exchangeable scores, at
most $r$ indices have upper rank no greater than $r$. Exchangeability
therefore gives
\begin{equation*}
\Pr\!\left(R^{+}\leq r\mid
\hat h,\mathcal Q(\mathcal B)\right)
\leq \frac{r}{n_0+1}
\end{equation*}
for every $r\in\{0,\ldots,n_0+1\}$. Setting
$r=\lfloor t(n_0+1)\rfloor$ yields
\begin{equation*}
\Pr\!\left(p(u,j)\leq t\mid
\hat h,\mathcal Q(\mathcal B)\right)
\leq t.
\end{equation*}
Thus the implemented weak-rank $p$-value is super-uniform. Ties can
only increase the upper rank and hence preserve the bound.
\end{proof}

\subsection{Proof of Theorem~\ref{thm:fdr} and
Corollary~\ref{cor:by_fdr}}
\label{app:proof_fdr}

\begin{proof}
Condition on the learned artifacts and the candidate family
$\mathcal Q(\mathcal B)$. Lemma~\ref{lem:pvalue_validity}
establishes super-uniformity for every proxy-null $p$-value. Under
conditional PRDS on the proxy-null hypotheses, the BH result of
Benjamini and Yekutieli~\cite{benjamini2001control} gives
\begin{equation*}
\mathbb E\!\left[
\frac{
|\mathcal S_\alpha(\mathcal B)
\cap\mathcal H_0^{\mathrm{proxy}}|
}{
\max\{|\mathcal S_\alpha(\mathcal B)|,1\}
}
\,\middle|\,
\hat h,\mathcal Q(\mathcal B)
\right]
\leq \frac{m_0}{N}\alpha
\leq \alpha.
\end{equation*}
Taking expectations over the conditioning variables and applying the
tower property gives
\begin{equation*}
\mathrm{FDR}_{\mathrm{proxy}}
\bigl(\mathcal S_\alpha(\mathcal B)\bigr)
\leq \alpha,
\end{equation*}
which proves Theorem~\ref{thm:fdr}.

Under arbitrary dependence, the Benjamini--Yekutieli procedure
replaces $\alpha$ with $\alpha/H_N$. Applying its general-dependence
result conditionally and then using the tower property yields
\begin{equation*}
\mathrm{FDR}_{\mathrm{proxy}}
\bigl(\mathcal S_{\alpha}^{\mathrm{BY}}(\mathcal B)\bigr)
\leq \alpha.
\end{equation*}
This proves Corollary~\ref{cor:by_fdr}.
\end{proof}

\subsection{Proof of Corollary~\ref{cor:proxy_transfer}}
\label{app:proof_proxy_transfer}

\begin{proof}
For any selected set $\mathcal S$,
\begin{align*}
|\mathcal S\cap\mathcal H_0^{\mathrm{causal}}|
\leq\;&
|\mathcal S\cap\mathcal H_0^{\mathrm{proxy}}|
\nonumber\\
&+
|\mathcal S\cap
(\mathcal H_0^{\mathrm{causal}}
\setminus\mathcal H_0^{\mathrm{proxy}})|.
\end{align*}
Dividing both sides by $\max\{|\mathcal S|,1\}$ and taking expectations
gives
\begin{equation*}
\mathrm{FDR}_{\mathrm{causal}}(\mathcal S)
\leq
\mathrm{FDR}_{\mathrm{proxy}}(\mathcal S)
+
\Delta_{\mathrm{proxy}}(\mathcal S).
\end{equation*}
Applying Theorem~\ref{thm:fdr} with
$\mathcal S=\mathcal S_\alpha(\mathcal B)$ proves
Corollary~\ref{cor:proxy_transfer}.

If
$\mathcal H_0^{\mathrm{causal}}
\subseteq
\mathcal H_0^{\mathrm{proxy}}$,
then
$\mathcal H_0^{\mathrm{causal}}
\setminus
\mathcal H_0^{\mathrm{proxy}}
=
\emptyset$.
Thus,
$\Delta_{\mathrm{proxy}}
(\mathcal S_\alpha(\mathcal B))=0$,
which proves the final statement.
\end{proof}

\FloatBarrier
\bibliographystyle{elsarticle-num-names}
\bibliography{refs}

\end{document}